\title{Feedback Vertex Set and Even Cycle Transversal for $H$-Free Graphs: Finding Large Block Graphs\thanks{An extended abstract containing some of the results in this paper appeared in the proceedings of MFCS 2021~\cite{PPR21}.}}
\titlerunning{Feedback Vertex Set and Even Cycle Transversal for $H$-Free Graphs}
\author{Giacomo Paesani}{Department of Computer Science, Durham University, UK}{giacomo.paesani@durham.ac.uk}{https://orcid.org/0000-0002-2383-1339}{}
\author{Dani\"el Paulusma}{Department of Computer Science, Durham University, UK}{daniel.paulusma@durham.ac.uk}{https://orcid.org/0000-0001-5945-9287}{Supported by the Leverhulme Trust (RPG-2016-258).}
\author{Pawe{\l} Rz{\k a}\.{z}ewski}{Faculty of Mathematics and Information Science, Warsaw University of Technology, Poland \and Faculty of Mathematics, Informatics, and Mechanics, University of Warsaw, Poland}{p.rzazewski@mini.pw.edu.pl}{https://orcid.org/0000-0001-7696-3848}{Supported by Polish National Science Centre grant no. 2018/31/D/ST6/00062.}
\authorrunning{G. Paesani, D. Paulusma and P. Rz{\k a}\.{z}ewski}
\keywords{Feedback vertex set, even cycle transversal, odd cactus, forest, block}
\newcommand{\wei}{\mathfrak{w}}
\newcommand{\Q}{\mathbb{Q}}
\newcommand{\cI}{\mathcal{I}}
\newcommand{\cC}{\mathcal{C}}
\newcommand{\cB}{\mathcal{B}}
\newcommand{\cX}{\mathcal{X}}
\newcommand{\cS}{\mathcal{S}}
\newcommand{\Oh}{\mathcal{O}}
\newcommand{\CBG}{{\sc Max $\cC$-Block Graph}\xspace}
\newcommand{\Blob}[1]{#1^\circ}
\newcommand{\BlockTree}[1]{\mathsf{BCF}(#1)}
\newcommand{\Blocks}[1]{\mathsf{Blocks}(#1)}
\newcommand{\Cuts}[1]{\mathsf{Cutvertices}(#1)}
\newcommand{\NP}{{\sf NP}}
\newcommand{\ssi}{\subseteq_i}
\newcommand{\si}{\supseteq_i}
\newcommand{\problemdef}[3]{
	\begin{center}
		\begin{boxedminipage}{.99\textwidth}
			\textsc{{#1}}\\[2pt]
			\begin{tabular}{ r p{0.8\textwidth}}
				\textit{~~~~Instance:} & {#2}\\
				\textit{Objective:} & {#3}
			\end{tabular}
		\end{boxedminipage}
	\end{center}
}
\newtheorem{claimm}{Claim}
\begin{document}

\maketitle

\begin{abstract}
We prove new complexity results for {\sc Feedback Vertex Set} and {\sc Even Cycle Transversal} on $H$-free graphs, that is, graphs that do not contain some fixed graph $H$ as an induced subgraph. In particular, we prove that for every $s\geq 1$, both problems are polynomial-time solvable for $sP_3$-free graphs and $(sP_1+P_5)$-free graphs;
here, the graph $sP_3$ denotes the disjoint union of $s$ paths on three vertices and the graph $sP_1+P_5$ denotes the disjoint union of $s$ isolated vertices and a path on five vertices.
Our new results for {\sc Feedback Vertex Set} extend all known polynomial-time results for {\sc Feedback Vertex Set} on $H$-free graphs, namely for $sP_2$-free graphs [Chiarelli et al., TCS 2018], $(sP_1+P_3)$-free graphs [Dabrowski et al., Algorithmica 2020] and $P_5$-free graphs [Abrishami et al., SODA 2021].
Together, the new results also show that both problems exhibit the same behaviour on $H$-free graphs (subject to some open cases). This is in part due to a new general algorithm we design for finding in a ($sP_3)$-free or $(sP_1+P_5)$-free graph $G$ a largest induced subgraph whose blocks belong to some finite class ${\cal C}$ of graphs. We also compare our results with the state-of-the-art results for the {\sc Odd Cycle Transversal} problem, which is known to behave differently on $H$-free graphs.
\end{abstract}

\section{Introduction}\label{s-intro}
For a set of graphs ${\cal F}$, an {\it ${\cal F}$-transversal} of a graph $G$ is a set of vertices that intersects the vertex set of every (not necessarily induced) subgraph of $G$ that is isomorphic to some graph of~${\cal F}$.
The problem {\sc Min ${\cal F}$-Transversal} (also called ${\cal F}$-{\sc Deletion}) is to find an ${\cal F}$-transversal of minimum size (or size at most $k$, in the decision variant).
Graph transversals form a central topic in Discrete Mathematics and Theoretical Computer Science, both from a structural and an algorithmic point of view. 

If ${\cal F}$ is the set of all cycles, the set of all even cycles or odd cycles, then we obtain the problems {\sc Feedback Vertex Set}, {\sc Even Cycle Transversal} and {\sc Odd Cycle Transversal}, respectively. All three problems are \NP-complete; hence, they have been studied for special graph classes, in particular {\it hereditary} graph classes, that is, classes closed under vertex deletion. Such  classes can be characterized by a (unique) set ${\cal H}$ of minimal forbidden induced subgraphs. Then, in order to initiate a systematic study, it is standard to first consider the case where ${\cal H}$ has size~$1$, say ${\cal H}=\{H\}$ for some graph~$H$.

We aim to extend known complexity results for {\sc Feedback Vertex Set} for $H$-free graphs and to perform a new, similar study for {\sc Even Cycle Transversal} (for which, so far, mainly parameterized complexity results exist~\cite{AGHKKKKO20,BBBK20,KLPS17,MRRS12}). To describe the known and new results we need some terminology. 
The cycle and path on $r$ vertices are denoted $C_r$ and $P_r$, respectively.
The \emph{disjoint union} of two vertex-disjoint graphs~$G_1$ and~$G_2$ is the graph $G_1+G_2= (V(G_1)\cup V(G_2), E(G_1)\cup E(G_2))$. We write $sG$ for the disjoint union of $s$ copies of $G$.  For a set $S\subseteq V$, let~$G[S]$ be the subgraph~of~$G$ induced by~$S$.
We write $H\ssi G$ (or $G\si H$) if $H$ is an induced subgraph~of~$G$.

\subsection{Known Results}\label{s-known}
 
By Poljak's construction~\cite{Po74}, for every integer~$g\geq 3$, {\sc Feedback Vertex Set}  is \NP-complete for graphs of girth at least~$g$ (the {\it girth} of a graph is the length of its shortest cycle).  
The same holds for {\sc Odd Cycle Transversal}~\cite{CHJMP18}.  It is also known that {\sc Feedback Vertex Set}~\cite{Mu17b} and {\sc Odd Cycle Transversal}~\cite{CHJMP18} are
\NP-complete for line graphs and thus for claw-free graphs (the claw is the $4$-vertex star).  Hence, both problems are \NP-complete for the class of $H$-free graphs whenever $H$ has a cycle or claw. A graph with no cycles and no claws is a forest of maximum degree at most~$2$.  Thus, it remains to consider the case where $H$ is a {\it linear forest}, that is, a collection of disjoint paths.  Both problems are polynomial-time solvable
on permutation graphs~\cite{BK85} and thus on $P_4$-free graphs~\cite{BK85},
on  $sP_2$-free graphs for every $s\geq 1$~\cite{CHJMP18} and on $(sP_1+P_3)$-free graphs for every $s\geq 0$~\cite{DFJPPR19}. Additionally, {\sc Feedback Vertex Set} is polynomial-time solvable on $P_5$-free graphs~\cite{ACPRS21}, and
{\sc Odd Cycle Transversal}  is \NP-complete for $(P_2+P_5,P_6)$-free graphs~\cite{DFJPPR19}.
A similar \NP-hardness result for  {\sc Feedback Vertex Set} or {\sc Even Cycle Transversal} is unlikely:
 for every linear forest~$H$,
both problems are quasipolynomial-time solvable on $H$-free graphs~\cite{GLPPR21} 
(see Section~\ref{s-con} for details).

\subsection{New Polynomial-Time Results}\label{s-our}

We first note that {\sc Min ${\cal F}$-Transversal} is polynomially equivalent to {\sc Max Induced ${\cal F}$-Subgraph}, the problem of finding a maximum-size induced subgraph of the input graph $G$ that does not belong to ${\cal F}$ (where we assume that $G$ has at least one such subgraph).
We say that {\sc Max Induced ${\cal F}$-Subgraph} is the {\it complementary} problem of {\sc Min ${\cal F}$-Transversal}, and vice versa.
For example, setting ${\cal F}=\{P_2\}$ yields the well-known complementary problems {\sc Min Vertex Cover} and {\sc Max Independent Set}.

Using the complementary perspective, we now argue that {\sc Feedback Vertex Set} and {\sc Even Cycle Transversal} are closely related, in contrast to {\sc Odd Cycle Transversal}.
A graph $G$ is \emph{biconnected} if it has at least two vertices, is connected, and $G-u$ is connected for every $u\in V(G)$.
A \emph{block} of a graph $G$ is an inclusion-wise maximal biconnected subgraph of $G$.
We now let $\cC$ be a set of biconnected graphs. A graph~$G$ is a {\it $\cC$-block graph} if every block of~$G$ is isomorphic to some graph in $\cC$.
If $\cC=\{P_2\}$, then $\cC$-block graphs 
are precisely forests, and if $\cC=\{P_2,C_3,C_5,C_7,\ldots\}$, then $\cC$-block graphs are  
called {\it odd cacti}. It is well known that a graph is an odd cactus if and only if it does not contain any even cycle as a subgraph.
Hence, the complementary problems of {\sc Even Cycle Transversal} and {\sc Feedback Vertex Set} are somewhat similar:
in particular, both forests and odd cacti have bounded treewidth and their blocks have a very simple structure.
This is in stark contrast to {\sc Odd Cycle Transversal}, whose complementary problem is to find a large induced bipartite subgraph, which might be arbitrarily complicated.

The commonality of complementary problems of {\sc Even Cycle Transversal} and {\sc Feedback Vertex Set} leads to the following optimization problem, where ${\cal C}$ is some fixed class of biconnected graphs, that is, ${\cal C}$ is not part of the input but specified in advance.
Note that we consider the more general setting in which every vertex $v$ of $G$ is equipped with a weight $\wei(v) > 0$, and we must find a solution with maximum total weight.

\problemdef{\CBG}{a graph $G=(V,E)$ with a vertex weight function $\wei : V \to \Q^+$.}{find a maximum-weight set $X \subseteq V$ such that $G[X]$ is a $\cC$-block graph.}

We observe that \CBG\ is well-defined for every set $\cC$, including $\cC = \emptyset$, as every independent set in a graph forms a solution.
A restriction of the \CBG problem was introduced and studied from a parameterized complexity perspective by 
Bonnet et al.~\cite{BBKM16} as {\sc Bounded ${\cal C}$-Block Vertex Deletion} (so from the complementary perspective) 
where each block must in addition have bounded size.

In Section~\ref{s-pre} we slightly extend a previously known result, concerning the so-called \emph{blob graphs}~\cite{GLPPR21}.
This extended version of the result forms a key ingredient for the proof of our main results, shown in Sections~\ref{s-poly} and~\ref{s-mmm}, respectively, which are the following two theorems.

\begin{theorem}\label{thm:main}
For every integer $s\geq 1$ and every \emph{finite} class $\cC$ of biconnected graphs, 
\CBG 
can be solved in polynomial time for $sP_3$-free graphs.
\end{theorem}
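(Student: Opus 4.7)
My strategy is to combine the blob graph machinery extended in Section~\ref{s-pre} with a branching argument tailored to $sP_3$-free graphs.

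First, in $G$ I would compute greedily a maximal collection $\mathcal{P}$ of vertex-disjoint induced copies of $P_3$. Since $G$ is $sP_3$-free, $|\mathcal{P}| \leq s - 1$, so the set $D := \bigcup_{P \in \mathcal{P}} V(P)$ satisfies $|D| \leq 3(s-1)$, a constant depending only on $s$. By maximality, $G - D$ contains no induced $P_3$ and is therefore a disjoint union of cliques $K^1, \dots, K^m$. In this way $G$ splits into a constant-size ``core'' $D$ and a well-behaved ``periphery''.

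Next, I would branch on how an optimal solution $X$ interacts with $D$. There are $2^{|D|} = O(1)$ candidates for $X_D := X \cap D$. For each fixed $X_D$, I would further enumerate all blocks of $G[X]$ that contain at least one vertex of $D$: each such \emph{crossing block} has at most $c := \max_{H \in \cC} |V(H)|$ vertices, including at least one from $D$, giving polynomially many candidates per $D$-vertex; and since distinct blocks share at most a cut vertex (which must lie in $X_D$), only a constant number of crossings are ever chosen simultaneously. The total number of branches is therefore polynomial.

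Once $X_D$ and the crossing blocks are fixed, the residual problem decomposes over the cliques $K^i$ of $G - D$: for each $K^i$ one must choose $X \cap V(K^i)$ to maximise weight, subject to forming blocks in $\cC$ and being compatible with the already-fixed crossings. This per-clique subproblem is handled by applying the extended blob graph result of Section~\ref{s-pre} to $K^i$ with appropriate boundary constraints. Independence across distinct cliques is guaranteed because $G - D$ has no edges between different $K^i$ and $K^j$, so any block of $G[X]$ spanning two cliques must route through $D$ and is therefore among the already-enumerated crossings.

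\textbf{Main obstacle.} The most delicate point will be the correct treatment of the per-clique subproblem: one must enforce that each guessed crossing block remains \emph{maximal} biconnected in $G[X]$, which in particular rules out selecting additional vertices of $K^i$ that would extend an existing crossing block through the clique (since $K^i$ is a clique, any further chosen vertex is fully adjacent to the intersection of a crossing block with $K^i$). Formalising these boundary constraints and matching them with the blob graph formalism of Section~\ref{s-pre} will be the bulk of the argument.
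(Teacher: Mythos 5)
Your opening move is correct and genuinely different from the paper's: a maximal packing of vertex-disjoint induced $P_3$'s has at most $s-1$ members, so $D$ has constant size and $G-D$ is a disjoint union of cliques. However, the branching step that follows has a real gap. You claim that, because distinct blocks share at most a cutvertex lying in $X_D$, ``only a constant number of crossings are ever chosen simultaneously.'' This is false: a single vertex $v\in D$ may be a cutvertex of $G[X]$ lying in arbitrarily many blocks. For instance, if $v$ has pendant neighbours $u_1,\dots,u_m$, one in each of $m$ different cliques of $G-D$, then $G[X]$ is a star with $m$ blocks, all of them ``crossing''; a star is $2P_3$-free, so nothing in the $sP_3$-freeness of $G$ bounds $m$. (The same happens with $m$ triangles glued at $v$.) Hence the family of crossing blocks can have size $\Omega(n)$, and exhaustively enumerating which subfamily the optimum uses is not a polynomial number of branches. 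This unbounded fan of blocks hanging off a bounded set of vertices is exactly the obstacle the paper's proof is organised around: it guesses only a constant-size skeleton (the $\Oh(ds)$ terminals of types 1 and 2, a backbone of $\Oh(ds^2)$ vertices, and the internal blocks), and then recovers the possibly unboundedly many leaf blocks, double-blocks and trivial components \emph{in a single step}, as a maximum-weight independent set in an auxiliary graph that is an induced subgraph of the blob graph $\Blob{G}$ and therefore $sP_3$-free by Theorem~\ref{thm:blob}.

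A second, related omission: your final step ``apply the extended blob graph result of Section~\ref{s-pre} to $K^i$ with appropriate boundary constraints'' is not yet an algorithm. Theorem~\ref{thm:blob} is a purely structural statement (it transfers $sP_3$-freeness from $G$ to $\Blob{G}$); the algorithmic engine that actually does the optimisation in the paper is the polynomial-time \textsc{Max Weight Independent Set} algorithm for $sP_3$-free graphs of Brandst\"adt and Mosca, applied to the $sP_3$-free subgraph of $\Blob{G}$ whose vertices are the candidate leaf blocks, double-blocks and trivial components. Your plan never identifies this ingredient, and without it the per-clique subproblems (and, more importantly, the selection of the unboundedly many crossing blocks above) have no polynomial-time solution method. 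The clique decomposition of $G-D$ is an appealing simplification, but to repair the proof you would still need a mechanism of the independent-set-in-the-blob-graph type to choose which of the polynomially many candidate blocks attached to $D$ are simultaneously selected.
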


\begin{theorem}\label{thm:P5sP1}
For every integer $s\geq 1$ and every \emph{finite} class $\cC$ of biconnected graphs, 
\CBG 
can be solved in polynomial time for $(sP_1+P_5)$-free graphs.
\end{theorem}

\noindent
We note that $sP_3$-free graphs are the graphs that become a disjoint union of cliques after removing the vertices of any induced $(s-1)P_3$ and their neighbours. The class of $(sP_1+P_5)$-free graphs is a natural generalization of the class of $(P_1+P_5)$-free graphs. The latter graphs are also known as the  {\it nearly $P_5$-free graphs}, that is, graphs in which the subgraph induced by the non-neighbourhood of 
any vertex is $P_5$-free. More generally, a graph is {\it nearly} $\pi$ for some graph property~$\pi$ if the subgraph induced by the non-neighbourhood of 
any vertex has property~$\pi$. It is easy to see that {\sc Max Independent Set} is polynomial-time solvable for graphs that are nearly~$\pi$ if it is so for graphs with property~$\pi$ (see, for example,~\cite{BH07}). However, for many other graph problems, including the problems we study in this paper, such a statement either does not hold, is not known, or could be non-trivial to prove even for graphs that are nearly $P_5$-free (such as for example, {\sc Connected Vertex Cover}~\cite{JPP20}).

We prove both theorems using the same technique. Essentially we reduce to {\sc Max Independent Set} for $sP_3$-free blob graphs and $(sP_1+P_5)$-free blob graphs, respectively. In order to do this, we first perform a structural analysis of $sP_3$-free $\cC$-block graphs and $(sP_1+P_5)$-free $\cC$-block graphs.
These analyses have some common elements, namely they are based on the so-called block-cut forest of the (unknown) maximum-weight solution $F$. This forest contains as its vertices the cutvertices $x$ and blocks $b$ of $F$ such that $xb$ is an edge if and only if $x$ belongs to $b$. The precise arguments are different and the resulting polynomial-time algorithms exploit the $sP_3$-freeness and $(sP_1+P_5)$-freeness in different ways.

\subsection{Implications and New Hardness Results}

Theorems~\ref{thm:main} and~\ref{thm:P5sP1} imply corresponding results for {\sc Feedback Vertex Set}, as the latter problem is equivalent to 
\textsc{Max $\{P_2\}$-Block Graph}.
The condition for ${\cal C}$ to be finite is critical for our proof technique.
Nevertheless, we still have the corresponding result for {\sc Even Cycle Transversal} as well: for $sP_3$-free graphs, the cases $\cC=\{P_2,C_3,C_5,C_7,\ldots\}$ and $\cC=\{P_2,C_3,C_5,\ldots,C_{4s-3}\}$ are equivalent. Note that we cannot make such an argument for {\sc Odd Cycle Transversal}, as arbitrarily large bicliques are $2P_3$-free.

\begin{corollary}\label{c-1}
For every integer $s\geq 1$, {\sc Feedback Vertex Set} and {\sc Even Cycle Transversal} can be solved in polynomial time for $sP_3$-free graphs and $(sP_1+P_5)$-free graphs.
\end{corollary}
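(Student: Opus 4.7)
The plan is to combine the extended blob graph machinery developed in \cref{s-pre} with a structural analysis exploiting $sP_3$-freeness. Since $\cC$ is finite, fix an upper bound $c$ on the number of vertices of any graph in $\cC$; every block of any feasible solution $G[X]$ then has at most $c$ vertices, so the set $\cB$ of all \emph{blobs} --- induced subgraphs of $G$ isomorphic to some member of $\cC$ --- has size $\Oh(n^c)$ and can be enumerated in polynomial time. Crucially, any solution $G[X]$ inherits $sP_3$-freeness from $G$.

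The first step would be to reformulate \CBG as the task of selecting a maximum-weight sub-family of $\cB$ whose blobs glue together into a valid block-cut tree --- that is, any two selected blobs share at most one vertex, no vertex is shared by three selected blobs, and the incidence structure of blobs and their shared cut-vertices is acyclic --- and then to account separately for pendant vertices (or pendant $P_2$-blocks, when $P_2\in\cC$) attached at the exposed cut-vertices. This is exactly the task the blob graph abstraction of \cref{s-pre} is designed for; its extension to the weighted setting is needed in order to handle the vertex weights of $G$ and to correctly credit the pendant attachments.

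Second, I would use the $sP_3$-freeness of $G$ to bound the \emph{structural} complexity of an optimal solution. A careful case analysis should show that any sufficiently rich configuration of vertex-disjoint ``non-trivial'' parts of $G[X]$ --- for instance, many disjoint non-trivial blocks, many branching cut-vertices, or a long induced path traversing many blocks --- forces $s$ vertex-disjoint induced $P_3$'s in $G$, a contradiction. Consequently, the ``skeleton'' of the block-cut tree of an optimal $G[X]$ has size bounded in terms of $s$ and $c$, and can be guessed by enumerating $\Oh(n^{f(s,c)})$ candidate configurations of blobs. After fixing a skeleton, the problem reduces to optimally attaching pendant structures at the exposed cut-vertices; this residual attachment problem decomposes into independent subproblems on essentially $P_3$-free local pieces, which are disjoint unions of cliques and therefore easy to solve optimally.

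The main obstacle, I expect, is proving the structural bound precisely --- that $sP_3$-freeness of $G$ forces a bounded skeleton while still allowing arbitrarily many trivial pendant blocks stacked at a single cut-vertex. The bookkeeping is delicate because pendant $P_2$-blocks can be chained through shared cut-vertices into long induced paths, producing many disjoint $P_3$'s in $G$ unless arranged to avoid this; meanwhile, the blob graph framework of \cref{s-pre} must be calibrated so that the weights of shared cut-vertices are neither double-counted nor omitted. Handling this interplay correctly, rather than the high-level algorithmic template, is where I expect the bulk of the work to lie.
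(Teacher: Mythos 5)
The statement you are asked to prove is the corollary, and in the paper it follows almost immediately from \cref{thm:main}: {\sc Feedback Vertex Set} is exactly \textsc{Max $\{P_2\}$-Block Graph}, and for {\sc Even Cycle Transversal} one observes that the complementary target class (odd cacti) is the class of $\cC$-block graphs for the \emph{infinite} family $\cC=\{P_2,C_3,C_5,C_7,\ldots\}$, which on $sP_3$-free graphs can be replaced by a finite family $\{P_2,C_3,\ldots,C_{4s-3}\}$, because a sufficiently long induced cycle contains $s$ pairwise non-adjacent induced copies of $P_3$. Your proposal never performs either instantiation. In particular, you assume from the outset that $\cC$ is finite, which is precisely the hypothesis that fails for {\sc Even Cycle Transversal}; the truncation of the odd-cycle family is the one genuinely new observation needed to obtain the ECT half of the corollary from the main theorem, and it is absent. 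As written, your argument (even if completed) would yield only the FVS half.

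Beyond that, what you sketch is essentially a re-derivation of the proof of \cref{thm:main} (bound and guess a skeleton, then complete via the blob graph), but two of its steps would fail as stated. First, your reformulation demands that ``no vertex is shared by three selected blobs''; this is false for $\cC$-block graphs in general, since a cutvertex may lie in arbitrarily many blocks (for example, many triangles sharing a single vertex), and the paper's notion of terminals and its skeleton analysis are designed precisely to cope with unboundedly many pendant blocks hanging at one cutvertex. Second, the residual attachment problem after fixing the skeleton does not decompose into ``essentially $P_3$-free local pieces'' that are disjoint unions of cliques: the candidates for pendant blocks, double-blocks and trivial components conflict with one another via adjacency in $G$, and the resulting conflict graph is only guaranteed to be $sP_3$-free, being an induced subgraph of $\Blob{G}$ (this is exactly where \cref{thm:blob} is used). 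One therefore genuinely needs a polynomial-time \textsc{Max Weight Independent Set} algorithm for $sP_3$-free graphs (Brandst\"adt and Mosca) to finish, not an argument about unions of cliques.
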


\noindent
Corollary~\ref{c-1} extends the aforementioned results for {\sc Feedback Vertex Set} on $sP_2$-free graphs and $(sP_1+P_3)$-free graphs. In Section~\ref{s-ect} we prove that {\sc Even Cycle Transversal} is \NP-complete for graphs of large girth and for line graphs, and consequently, for $H$-free graphs where $H$ contains a cycle or a claw.
Hence, {\sc Feedback Vertex Set} and {\sc Even Cycle Transversal} behave similarly on $H$-free graphs, subject to a number of open cases, which we listed in Table~\ref{t-table}.

\newcommand\Tstrut{\rule{0pt}{2.6ex}}        
\newcommand\Bstrut{\rule[-0.9ex]{0pt}{0pt}}
\begin{table}[h]
\begin{tabularx}{\textwidth}{ p{1.1cm} p{3.45cm} X  p{2.6cm}  }
    \hline
    & polynomial-time 
    & unresolved & $\NP$-complete\Tstrut\Bstrut \\
  \hline
    {\sc FVS} &  \textcolor{blue}{$H \ssi sP_1+P_5$ or} \newline \phantom{$H\ssi$} \textcolor{blue}{$sP_3$ for $s\geq 1$}  & $H\si P_2+P_4$ or $P_6$ & none\Tstrut\Bstrut \\  \hline
    {\sc ECT} &  \textcolor{blue}{$H \ssi sP_1+P_5$ or} \newline \phantom{$H\ssi$} \textcolor{blue}{$sP_3$ for $s\geq 1$}  & $H\si P_2+P_4$ or $P_6$ & none\Tstrut\Bstrut \\  \hline
   {\sc OCT} & $H=P_4$ or \newline  $H\ssi sP_1+P_3$ or \newline \phantom{$H\ssi$} $sP_2$ for $s\geq 1$ &    
    $H=sP_1+P_5$ for $s\geq 0$  or \newline $H=sP_1+tP_2+uP_3+vP_4$ \newline for $s,t,u \ge 0$, $v\geq 1$
    \newline
    with $\min\{s,t,u\}\geq 1$ if $v=1$, or
    \newline
    $H=sP_1+tP_2+uP_3$ for $s,t\geq 0$, $u\geq 1$
    with $u\geq 2$ if $t=0$
     & $H\si P_6$ or $P_2+P_5$\Tstrut\Bstrut \\  \hline \\[-0.25cm]
\end{tabularx}
 \caption{The complexity of {\sc Feedback Vertex Set} (FVS), {\sc Even Cycle Transversal} (ECT)
 and {\sc Odd Cycle Transversal} (OCT) on $H$-free graphs for a linear forest $H$.  All three problems are $\NP$-complete for $H$-free graphs when $H$ is not a linear forest (see also Section~\ref{s-ect}). The blue cases (for FVS and ECT) are the {\it algorithmic} contributions of this paper. We write $H\ssi H'$ if $H$ is an induced subgraph of $H'$. 
See Section~\ref{s-known} for references to the known results in the table.}\label{t-table}
\end{table}

\section{Blob Graph of Graphs With No Large Linear Forest}\label{s-pre}

Let $G=(V,E)$ be a graph. A {\it  (connected) component}  is a maximal connected subgraph of~$G$.
The \emph{neighbourhood} of a vertex $u\in V$ is the set $N_G(u)=\{v\; |\; uv\in E\}$. For $U\subseteq V$, we let $N_G(U)=\bigcup_{u\in U}N(u)\setminus U$. Two sets $X_1,X_2 \subseteq V(G)$ are \emph{adjacent} if $X_1 \cap X_2 \neq \emptyset$ or there exists an edge with one endvertex in $X_1$ and the other in $X_2$. 
The \emph{blob graph} $\Blob{G}$ of $G$ is defined as follows.
\[
V(\Blob{G}) \coloneqq \{ X \subseteq V(G) ~|~ G[X] \text{ is connected} \} \text{ and } E(\Blob{G}) \coloneqq \{ X_1X_2 ~|~ X_1 \text{ and } X_2 \text{ are adjacent} \}.
\]
Gartland et al.~\cite{GLPPR21} showed that for every graph $G$, the length of a longest induced path in~$\Blob{G}$ is equal to the length of a longest induced path in $G$. We slightly generalize this result.

\begin{theorem} \label{thm:blob}
For every linear forest $H$, a graph $G$ contains $H$ as an induced subgraph if and only if $\Blob{G}$ contains $H$ as an induced subgraph.
\end{theorem}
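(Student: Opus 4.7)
I would split the equivalence into two directions. The forward direction (if $G\ssi H$ then $\Blob{G}\ssi H$) is the easy one: the map $v\mapsto \{v\}$ sends $V(G)$ into $V(\Blob{G})$, since every singleton induces a connected subgraph, and two singletons $\{u\},\{v\}$ are adjacent in $\Blob{G}$ iff $uv\in E(G)$. Thus $G$ itself appears as an induced subgraph of $\Blob{G}$, and any induced copy of $H$ in $G$ lifts verbatim.

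The backward direction is the content of the theorem, and the plan is to reduce it, component by component, to the theorem of Gartland et al.~\cite{GLPPR21} for a single induced path. Write $H = P_{k_1}+P_{k_2}+\cdots+P_{k_m}$ and suppose $\Blob{G}$ contains $H$ as induced subgraph, realised by sets $\cX_j = \{X^{(j)}_1,\ldots,X^{(j)}_{k_j}\}\subseteq V(\Blob{G})$ for $j=1,\ldots,m$, where each $\cX_j$ induces the $j$-th path in $\Blob{G}$. Set $U_j := X^{(j)}_1\cup\cdots\cup X^{(j)}_{k_j}\subseteq V(G)$. The key observation is that for $j\ne j'$ and any $X\in\cX_j$, $X'\in\cX_{j'}$, the sets $X$ and $X'$ are non-adjacent in $\Blob{G}$, which by definition of adjacency in the blob graph forces $X\cap X' = \emptyset$ \emph{and} no edge of $G$ between $X$ and $X'$. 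Consequently $U_1,\ldots,U_m$ are pairwise vertex-disjoint and pairwise non-adjacent in $G$.

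Now for each fixed $j$ the family $\cX_j$ still consists of connected subsets of $G[U_j]$, and their adjacency relation in $\Blob{G[U_j]}$ agrees with that in $\Blob{G}$ restricted to $\cX_j$; hence $\Blob{G[U_j]}$ contains an induced $P_{k_j}$. Applying the Gartland et al.\ result to $G[U_j]$ produces an induced $P_{k_j}$ inside $G[U_j]$. The union of these $m$ paths, one in each $G[U_j]$, is then an induced copy of $H$ in $G$, because the $U_j$'s are pairwise vertex-disjoint and non-adjacent. I expect the only subtle point is making sure the adjacency relation is genuinely inherited when one passes from $\Blob{G}$ to $\Blob{G[U_j]}$, but this is immediate from the fact that $\cX_j\subseteq 2^{U_j}$ and $E(G[U_j]) = E(G)\cap \binom{U_j}{2}$. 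Beyond this observation, the entire argument is a direct consequence of the single-path case, so the linear-forest generalisation is essentially a free bonus.
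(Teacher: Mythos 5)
Your proposal is correct and follows essentially the same route as the paper: both directions rest on the embedding $v\mapsto\{v\}$ and on reducing the converse to the single-path result of Gartland et al.\ by restricting to the union of the blobs realising each path-component and using non-adjacency in $\Blob{G}$ (which forces disjointness and no edges in $G$) to keep the resulting paths independent. The only cosmetic difference is that the paper peels off one component at a time by induction (working in $G-N[Y]$ for the remainder), whereas you handle all components simultaneously via the pairwise non-adjacent sets $U_1,\ldots,U_m$; both arguments are sound.
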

\begin{proof}
As $G$ is an induced subgraph of $\Blob{G}$, the $(\Rightarrow)$ implication is immediate. We prove the $(\Leftarrow)$ implication
by induction on the number $k$ of connected components of $H$.
If $k=1$, then the claim follows directly from the aforementioned result of Gartland et al.~\cite{GLPPR21}.
So assume that $k \geq 2$ and the statement holds for all linear forests $H$ with fewer than $k$ connected components.
Let $P'$ be one of the connected components of $H$, and define $H' \coloneqq H - P'$.

Suppose that $\Blob{G}$ contains an induced subgraph 
isomorphic to $H$.
Let $\mathcal{X}$ be the set of vertices of $\Blob{G}$, such that $\Blob{G}[\mathcal{X}]$ is isomorphic to $H$.
Furthermore, let $\mathcal{Y} \subseteq \mathcal{X}$ be the set of vertices that induce in $\Blob{G}[\mathcal{X}]$ the component $P'$ of $H$,
that is, $\Blob{G}[\mathcal{Y}]$ is isomorphic to $P'$.

Let $Y \subseteq V(G)$ be the union of sets in $\mathcal{Y}$.
Note that $\Blob{G}[\mathcal{Y}]$ is an induced subgraph of $\Blob{(G[Y])}$.
Thus, by the inductive assumption, $G[Y]$ contains an induced copy of $P'$.

Let $X \subseteq V(G)$ be the union of sets in $\mathcal{X} \setminus \mathcal{Y}$.
Since the copy of $H$ in $\Blob{G}$ is induced, we know that in $\Blob{G}$ there are no edges between $\mathcal{X} \setminus \mathcal{Y}$ and $\mathcal{Y}$. This is equivalent to saying that $X \cap N[Y] = \emptyset$.
So we conclude that $\Blob{G}[\mathcal{X} \setminus \mathcal{Y}]$ is an induced subgraph of $\Blob{(G-N[Y])}$.
Since $\Blob{G}[\mathcal{X} \setminus \mathcal{Y}]$, and thus $\Blob{(G-N[Y])}$, contains an induced copy of $H'$,
by the inductive assumption we know that $G-N[Y]$ contains an induced copy of $H'$.
Combining this subgraph with the induced copy of $P'$ in $G[Y]$, we obtain an induced copy of $H$ in $G$.
\end{proof}

\section{The Proof of Theorem~\ref{thm:main}}\label{s-poly}

We start with analyzing the structure of $sP_3$-free $\cC$-block graphs in Section~\ref{s-str}, where $\cC$ is any finite class of biconnected graphs. Then, in Section~\ref{s-alg}, we  present our algorithm for \CBG on $sP_3$-free graphs.
 
\subsection{Structural Lemmas}\label{s-str}

From now on, let $\cC$ be a finite class of biconnected graphs.
For some fixed positive integer~$s$, let $G=(V,E)$ be an $sP_3$-free graph with $n$ vertices and vertex weights $\wei \colon V \to \Q^+$.
Let $X \subseteq V$ such that $F=G[X]$ is a $\cC$-block graph.
A component of $F$ is \emph{trivial} if it is a single vertex or a single block, otherwise it is \emph{non-trivial}. Let $F'$ be the graph obtained from $F$ by removing all trivial components. Note that $F'$ and $F$ are $sP_3$-free, as $G$ is $sP_3$-free.

We denote the set of cutvertices of $F'$ and the set of blocks of $F'$ by $\Cuts{F'}$ and $\Blocks{F'}$, respectively.
The \emph{block-cut forest} $\BlockTree{F'}$ of $F'$ has vertex set 
$\Cuts{F'} \cup \Blocks{F'}$ 
and an edge set that consists of all edges $xb$ such that $x \in \Cuts{F'}$ and $b \in \Blocks{F'}$, and $x$ belongs to $b$. By definition, each component of $F'$ has a cutvertex; we pick an arbitrary one as root for the corresponding tree in $\BlockTree{F'}$ to get a parent-child relation. 
Each leaf of $\BlockTree{F'}$ belongs to $\Blocks{F'}$, and we call such blocks \emph{leaf blocks}.

A cutvertex $x$ of $F'$ is a \emph{terminal of type 1} if $x$ has at least two children in $\BlockTree{F'}$ that are leaves, whereas
$x$ is a \emph{terminal of type~2} if there exists a leaf block, whose great-grandparent in $\BlockTree{F'}$ is $x$. In the latter case, there is a three-edge downward path from $x$ to a leaf in $\BlockTree{F'}$; see also Fig.~\ref{fig:terminals}. Let $d$ be the maximum number of vertices of a graph in $\cC$.

\begin{figure}[t]
\centering
\includegraphics[scale=1.1,page=1]{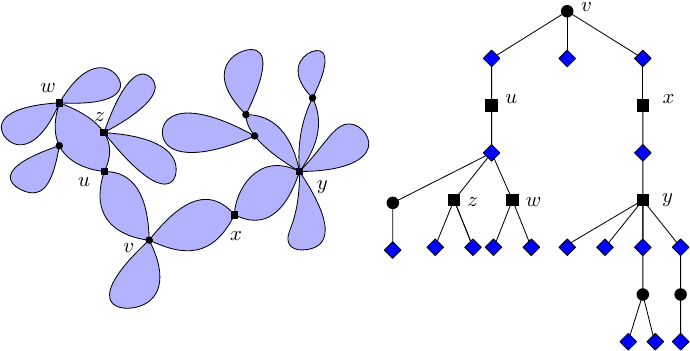}
\caption{Left: a graph $F'$. Blue shapes are blocks, squares are terminals, and dots are non-terminal cutvertices.
Right: $\BlockTree{F'}$, rooted in the cutvertex $v$. Blue diamonds are blocks; $w$
and $z$ 
are terminals of type~1, $u$ and $x$ are terminals of type~2, and $y$ is a terminal of both types.
The remaining cutvertices are not terminals. {\it We will also use this example with this particular $\BlockTree{F'}$ in later figures.}}
\label{fig:terminals} 
\end{figure}

\begin{lemma}\label{t-1}
At most $d \cdot (s-1)$ vertices of $F'$ are terminals of type~1.
\end{lemma}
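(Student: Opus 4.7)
The plan is to prove the contrapositive: if $F'$ has $k \geq d(s-1)+1$ terminals of type~1, then $F'$, and hence $G$, contains an induced $sP_3$, contradicting the $sP_3$-freeness of $G$.

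For each such terminal $x_i$, I pick two of its leaf-block children $B_i^1, B_i^2$ in $\BlockTree{F'}$, and a vertex $v_i^j \in B_i^j \setminus \{x_i\}$ for $j=1,2$ (possible since every block has at least two vertices). Since $B_i^j$ is a leaf block, $x_i$ is its only cut vertex, so $v_i^j$ is \emph{not} a cut vertex of $F'$; hence $v_i^j$ lies in a unique block of $F'$, namely $B_i^j$, and all its $F'$-neighbours lie inside $B_i^j$. In particular $v_i^1 \not\sim v_i^2$ (they lie in distinct blocks whose intersection is $\{x_i\}$), so $P^{(i)} \coloneqq v_i^1 x_i v_i^2$ is an induced $P_3$.

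The crux is then the observation that for $i\neq j$ the \emph{only} possible edge between $V(P^{(i)})$ and $V(P^{(j)})$ is $x_ix_j$: leaf blocks of distinct terminals share no vertices (their unique cut vertices differ and every shared vertex of two blocks is a cut vertex), and by the previous paragraph $v_i^k$ has no neighbour outside $B_i^k$, ruling out both $v_i^k v_j^\ell$ and $v_i^k x_j$. Consequently, any $s$ pairwise non-adjacent terminals among $x_1,\dots,x_k$ yield, together with their associated $P^{(i)}$'s, an induced $sP_3$ in $F'$. So it suffices to produce an independent set of size $s$ in $H \coloneqq F'[\{x_1,\dots,x_k\}]$.

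For this last step I would invoke the standard fact that a graph whose blocks are all $d$-colourable is itself $d$-colourable: induct on the number of blocks, peel a leaf block $B$ with cut vertex $x$, $d$-colour $F'-(B-x)$ inductively, then extend to $B$ using $|V(B)|\leq d$ after permuting the colour classes of $B$ so $x$ keeps its colour. Since every block of $F'$ has at most $d$ vertices, this gives $\chi(F')\leq d$, hence $\chi(H)\leq d$, hence $\alpha(H)\geq |V(H)|/\chi(H)\geq k/d > s-1$ and so $\alpha(H)\geq s$, as required. I do not foresee any genuine obstacle: the work is concentrated in the leaf-block non-adjacency observations, all of which are driven by the single structural fact that a non-cut-vertex of $F'$ has neighbours only inside its unique block.
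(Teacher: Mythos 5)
Your proof is correct and follows essentially the same route as the paper's: a $d$-colouring of $F'$ (obtained block-by-block) yields an independent set of $s$ type-1 terminals, each of which supplies a private induced $P_3$ centred at the terminal with endpoints in two distinct leaf-block children, and these $P_3$'s are pairwise non-adjacent, giving an induced $sP_3$. Your write-up merely spells out the non-adjacency verifications that the paper leaves implicit.
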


\begin{proof}
For contradiction, suppose that there are at least $d \cdot (s-1)+1$ terminals of type~1.
We observe that $F'$ is $d$-colourable. Indeed, each block has at most $d$ vertices, so $d$ colours are sufficient to colour each block.
Furthermore, we can permute the colours in each block, so that the colourings agree on cutvertices.

This implies that there is an independent set $X$ of size at least $s$, whose every element is a terminal of type~1.
For each such terminal $v$, let its \emph{private $P_3$} be a 3-vertex path with $v$ as the central vertex and each endpoint belonging to a different leaf block that is a child of $v$ in $\BlockTree{F'}$.
Note that each private $P_3$ is induced. Furthermore, the private $P_3$'s of vertices in $X$ are pairwise non-adjacent:
this follows from the definition of terminals of type~1 and the fact that $X$ is independent.
Thus we have found an induced $sP_3$ in $F$, a contradiction.
\end{proof}

\begin{lemma}\label{t-2}
At most $(d+1) \cdot (s-1)$ vertices of $F'$ are terminals of type~2.
\end{lemma}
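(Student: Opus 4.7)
The plan is to mirror the proof of \cref{t-1}. For each $x \in T_2$, the definition of type-2 terminal provides a descending path $x \to B_x \to y_x \to L_x$ in $\BlockTree{F'}$ with $L_x$ a leaf block. Distinct terminals in $T_2$ have distinct $B_x$'s (and hence distinct $y_x$'s), because every block -- respectively, every cutvertex -- has a unique parent in the block-cut forest. Standard facts then give $L_x \cap B_x = \{y_x\}$ (any two distinct blocks share at most one vertex) and that $y_x$ is the only cutvertex of $F'$ lying in $L_x$ (because $L_x$ is a leaf block).

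For each $x \in T_2$, pick a neighbour $u_x \in B_x \setminus \{y_x\}$ of $y_x$ and a neighbour $v_x \in L_x \setminus \{y_x\}$ of $y_x$; both exist because blocks are biconnected on at least two vertices. Define the \emph{private $P_3$} of $x$ to be $P_x := u_x\, y_x\, v_x$. Since $v_x$ is not a cutvertex of $F'$ we have $N_{F'}(v_x) \subseteq L_x \setminus \{y_x\}$, and since $u_x \notin L_x$ (because $u_x \neq y_x$ and $L_x \cap B_x = \{y_x\}$), we get $u_x v_x \notin E(F')$. Hence $P_x$ is an induced $P_3$ in $F'$.

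Assume for contradiction that $|T_2| \geq (d+1)(s-1)+1$. As in \cref{t-1}, $F'$ admits a proper $d$-colouring $c$. The new subtlety is that $c(y_x)=c(y_{x'})$ (hence $y_x y_{x'}\notin E(F')$) no longer suffices to prevent $P_x$ and $P_{x'}$ from clashing, because the $u$-endpoints may cause trouble -- the most dangerous case being $u_{x'}=y_x$, which forces $v_x u_{x'} \in E(F')$. To absorb this, I would work with an auxiliary graph $\widehat F \supseteq F'$ obtained by attaching, for each $x \in T_2$, a fresh vertex adjacent to every vertex of $B_x$; the augmented blocks still form a $\cC'$-block structure with block size at most $d+1$, so $\widehat F$ is properly $(d+1)$-colourable. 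Pigeonhole on a suitable colour class then extracts $s$ terminals $x_1,\dots,x_s$ whose private $P_3$'s can be shown, by a short case check using (i) $L_x \cap B_x = \{y_x\}$, (ii) the uniqueness of the cutvertex in a leaf block, and (iii) the colour separation supplied by $\widehat F$, to be pairwise vertex-disjoint and pairwise non-adjacent in $F'$. This produces an induced $sP_3$ in $F \subseteq G$, contradicting $sP_3$-freeness.

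The main obstacle, relative to \cref{t-1}, is precisely the $u$-endpoint bookkeeping: for a type-2 terminal $P_x$ spans the \emph{two} blocks $B_x$ and $L_x$, whereas for a type-1 terminal the private $P_3$ lives inside leaf-block children of the terminal itself. The extra $+1$ in the bound $(d+1)(s-1)$ is exactly the additional unit of colour freedom required to simultaneously separate $y$-centres and $u$-endpoints across different terminals.
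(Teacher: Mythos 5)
Your overall scaffolding---one private $P_3$ per type-2 terminal, a proper colouring with few colours, a pigeonhole to extract $s$ terminals, and a case check that their private $P_3$'s are pairwise non-adjacent---matches the paper's, and you correctly identify the genuinely new difficulty relative to \cref{t-1}: the endpoint of the private $P_3$ that lives in the non-leaf block $B_x$ can clash with another terminal's path. However, the device you propose to resolve this does not do the job. Attaching a fresh vertex to all of $B_x$ only forces the resulting $(d+1)$-colouring to be rainbow on each (augmented) block; it imposes no relation between the colours of vertices lying in \emph{different} blocks, and that cross-block separation is exactly what the remaining cases need. Concretely, with $P_x=u_x\,y_x\,v_x$ and a pigeonhole on the colour of the centres $y_x$, none of the following is excluded: (a) $u_{x_i}=x_i$ and $u_{x_j}=x_j$ with $x_ix_j\in E(F')$ (two adjacent type-2 terminals whose chosen $u$-endpoints are the terminals themselves), so the two paths are adjacent; (b) $u_{x_i}=u_{x_j}$, e.g.\ when $x_i$ is simultaneously the parent cutvertex of $B_{x_i}$ and a child of $B_{x_j}$, so the paths are not even vertex-disjoint; (c) $u_{x_i}$ a cutvertex lying in $B_{x_j}$ and adjacent there to $y_{x_j}$. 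You handle only the single case $u_{x'}=y_x$, and equality of the colours of the $y$'s together with rainbow blocks says nothing about the colours of the $u$'s or of the terminals themselves. So the ``short case check'' cannot be completed from the stated ingredients.

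The paper's fix is an explicitly strengthened colouring rather than an augmented graph: it builds, top-down along $\BlockTree{F'}$, a proper $(d+1)$-colouring in which every block is rainbow \emph{and} every vertex of a block $b$ avoids the colour of the cutvertex that is the great-grandparent of $b$ in $\BlockTree{F'}$. Its private $P_3$ is also anchored differently---the first three vertices of a shortest path from the terminal $v$ towards the leaf block, so the terminal itself is an endpoint and the pigeonhole class consists of pairwise non-adjacent terminals---and the great-grandparent condition is precisely what kills the cross-adjacencies occurring two and three levels down the block-cut forest. If you want to keep your witness-centred $P_3$, you would need an analogous descendant-avoidance property in the colouring; the fresh-vertex construction does not provide one.
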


\begin{proof}
For contradiction, suppose that there are at least $(d+1) \cdot (s-1)+1$ terminals of type~2.
Observe that $F'$ has a proper $(d+1)$-colouring $f$, satisfying the following two properties:\\[-10pt]
\begin{enumerate}
\item the vertices in each block receive pairwise distinct colours and
\item if $b$ is a block, then any vertex of $b$ receives a colour which is different than the colour of the cutvertex which is the great-grandparent of $b$ in $\BlockTree{F'}$ (if such a cutvertex exists).\\[-10pt]
\end{enumerate}
\noindent
It is easy to find such a colouring of each tree in $\BlockTree{F'}$ by choosing an arbitrary colour for the root and proceeding in a top-down fashion. Suppose we want to colour the block $b$ and its parent in $\BlockTree{F'}$ is the cutvertex $v$.
Recall that $b$ has at most $d$ vertices and exactly one of them is already coloured.
Furthermore, we want to avoid the colour of the grandparent of~$v$ (if such a vertex exists), so we have sufficiently many free colours to colour each vertex of $b\setminus \{v\}$ with a different one.

Now, by our assumption, there is a set $X$ of at least $s$ terminals of type~2 that received the same colour in $f$.
For each $v \in X$, we define its \emph{private $P_3$} as follows.
Recall that by the definition of a terminal of type~2, there is a leaf block $b$, whose great-grandparent in $\BlockTree{F'}$ is $v$.
The private $P_3$ of $v$ is given by the first three vertices on a shortest path $P$ from $v$ to $b$.
Note that in the extreme case it might happen that both $b$ and its grandparent in $\BlockTree{F'}$ are edges,
but $P$ always has at least three vertices.

Clearly, each private $P_3$ is an induced path.
We claim that the private $P_3$'s associated with 
any two vertices of $X$ form an induced $2P_3$.
For contradiction, suppose otherwise. Let $v_1,v_2$ be distinct vertices of $X$,
and let $v_i,x_i,y_i$ be the consecutive vertices of the private $P_3$ associated with $v_i$.
Let $b_i$ be the block containing $v_i$ and $x_i$.

First, observe that the sets $\{v_1,x_1,y_1\}$ and $\{v_2,x_2,y_2\}$ are disjoint.
Indeed, we know that $v_1 \neq v_2$ 
by assumption, and because $\BlockTree{F'}$ is a rooted tree, we have that $\{x_1,y_1\} \cap \{x_2,y_2\} = \emptyset$.
Furthermore, recall that $f(v_1) = f(v_2)$ and by the definition of $f$, we have that the colours of $x_i$ and of $y_i$ are different from the colour of $v_i$.

So now suppose that there is an edge with one endvertex in $\{v_1,x_1,y_1\}$ and the other in $\{v_2,x_2,y_2\}$.
Clearly this edge cannot join $v_1$ and $v_2$, as the colouring $f$ is proper.
Furthermore, there is no edge between $\{x_1,y_1\}$ and $\{x_2,y_2\}$, as $v_1$ and $v_2$ are cutvertices of a rooted tree.
Suppose that $v_2$ is adjacent to $x_1$ (the case that $v_1$ is adjacent to $x_2$ is symmetric).
As each vertex of $b_1$ gets assigned a different colour by $f$
and $f(v_1)=f(v_2)$,
we observe that $v_2$ cannot belong to $b_1$.
Thus $x_1$ is a cutvertex. However, by the second property of $f$, we obtain that the colour of $v_2$ must be different from the colour of $v_1$. 
As $f(v_1)=f(v_2)$, this is a contradiction.

So finally suppose that $v_2$ is adjacent to $y_1$
(the case that $v_1$ is adjacent to $y_2$ is symmetric).
Note that then $y_1$ cannot belong to a leaf block, meaning that $y_1$ belongs to $b_1$.
Similarly to the previous paragraph, the definition of $f$ implies that the colour of $v_2$ must be different from the colour of $v_1$, a contradiction.

We conclude that $\{v_1,x_1,y_1,v_2,x_2,y_2\}$ induces a $2P_3$. As $v_1$ and $v_2$ were arbitrary vertices of $X$ and $|X|\geq s$,
this means we have found an induced $sP_3$ in $F'$, a contradiction.
\end{proof}

Lemmas~\ref{t-1} and~\ref{t-2} imply the following.

\begin{lemma}\label{t-3}
The number of terminals of $F'$ is at most $(2d+1) \cdot (s-1)$.
\end{lemma}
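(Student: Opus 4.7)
The plan is to prove Lemma~\ref{t-3} by a direct union-bound argument combining the two preceding lemmas. By definition, every terminal of $F'$ is a terminal of type~1, a terminal of type~2, or both (as illustrated by the vertex $y$ in Figure~\ref{fig:terminals}, which is a terminal of both types simultaneously). Hence, writing $T_1$ for the set of terminals of type~1 and $T_2$ for the set of terminals of type~2, the set of all terminals of $F'$ is precisely $T_1 \cup T_2$.

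The first step is to invoke Lemma~\ref{t-1} to bound $|T_1| \leq d \cdot (s-1)$, and the second step is to invoke Lemma~\ref{t-2} to bound $|T_2| \leq (d+1) \cdot (s-1)$. Then, by subadditivity of cardinality on unions, we conclude
\[
|T_1 \cup T_2| \leq |T_1| + |T_2| \leq d \cdot (s-1) + (d+1) \cdot (s-1) = (2d+1) \cdot (s-1),
\]
which is exactly the claimed bound.

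There is no substantial obstacle here: the lemma is a straightforward corollary of the two preceding counting lemmas, and the only subtle point worth remarking on is that the two types are not disjoint, so one might a priori worry about double-counting. However, since the bound we want is the \emph{sum} of the two individual bounds, any overlap between $T_1$ and $T_2$ only helps, and no refinement is needed. The proof is therefore essentially a one-line deduction.
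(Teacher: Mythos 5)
Your proof is correct and matches the paper, which derives Lemma~\ref{t-3} directly from Lemmas~\ref{t-1} and~\ref{t-2} by exactly this union bound (the paper simply states ``Lemmas~\ref{t-1} and~\ref{t-2} imply the following'' without further elaboration). Your remark that overlap between the two types only helps is a fair, if minor, clarification.
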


\begin{figure}[t]
\centering
\includegraphics[scale=1.1,page=5]{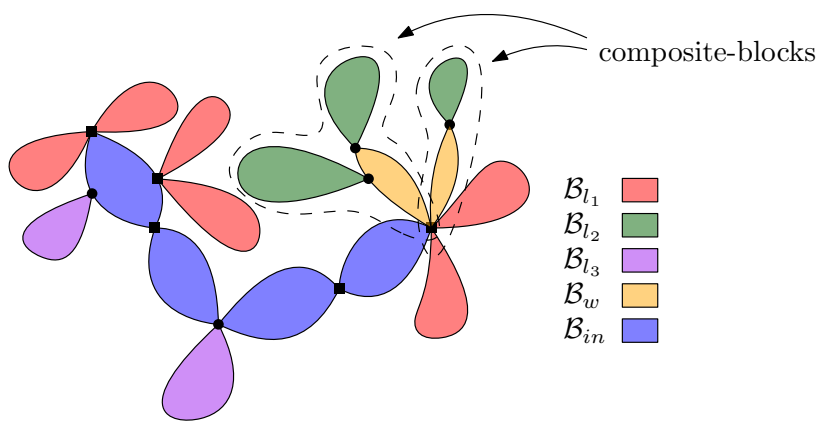}
\caption{The classification of blocks of the example of Figure~\ref{fig:terminals} (so this classification is based on the block-cut forest $\BlockTree{F'}$ from Fig.~\ref{fig:terminals}).}
\label{fig:blocktypes} 
\end{figure}

If $v$ is a terminal of type~$2$, then by definition there is a cutvertex $w$ that belongs to both a block containing $v$ as well as to some leaf block. We call such $w$ a \emph{witness} of $v$. 
We note that a (non-leaf) block may contain multiple witnesses.
Some of these witnesses might be terminals (and in that case they are of type~1), whereas other might not be terminals. 

We now partition the set of blocks of $F'$ into the following subsets; see also Fig.~\ref{fig:blocktypes}:\\[-10pt]
\begin{itemize}
\item $\cB_{l_1}$ is the set of leaf blocks 
containing a terminal of type~1,
\item $\cB_{l_2}$ is the set of leaf blocks containing a witness $w$ 
that is not a terminal of type~1,
\item $\cB_{l_3}$ is the set of remaining leaf blocks, that is, the ones with a cutvertex that is neither a terminal nor a witness,
\item $\cB_{w}$ is the set of blocks with at least two cutvertices, one of which is a terminal of type~2 and all the other ones are non-terminal witnesses of that type-2 terminal,
\item $\cB_{in}$ is the set of all remaining blocks.\\[-10pt]
\end{itemize}
Note that blocks in $\cB_{l_2}$ and $\cB_{w}$ come in groups, that is, for each block $B$ in $\cB_{w}$, there are at most $d-1$ blocks $B'$ in $\cB_{l_2}$, such that $B$ and $B'$ share a vertex,
(note that this common vertex is a non-terminal witness). 
We will consider
this group of
blocks as one object. 
Formally, a \emph{composite-block} is a graph $G[V(B)\cup \bigcup_{1\leq i\leq r} V(B_i)]$, for some integer $r\leq d-1$ chosen to be maximum, such that 

\begin{itemize}
\item $B \in \cB_{w}$, so $B$ has some vertex $u$ that is a terminal of type~$2$,
\item for every $i\in \{1,\ldots,r\}$, $B_i \in \cB_{l_2}$, and
\item for every $i\in \{1,\ldots,r\}$, $|V(B) \cap V(B_i)|=1$ and $u\notin V(B_i)$.
\end{itemize}

\noindent
Note that each composite-block has at most $(d-1)d+d=d^2$ vertices and contains exactly one terminal of type~$2$.
Let $\cB_{d}$ be the set of all composite-blocks.

A {\it backbone} of a component $Z$ of $F'$ is a minimum tree $T_Z$ contained in $Z$ that connects all terminals of $F'$ that belong to $Z$; observe that all leaves of $T_Z$ are terminals. 
The {\it skeleton}~$S$ of $F'$ is the graph obtained from $F'$ by removing all vertices from the blocks in $\cB_{l_1}$ except  terminals of type~1 and all vertices from the
composite-blocks 
in $\cB_{d}$ except terminals of type~2.
Note that every backbone is a subgraph of $S$. 
Furthermore, the vertices of the blocks in $\cB_{l_3}$ all belong to $S$.

\subsection{The Algorithm}\label{s-alg}

{\bf Outline.} Our polynomial-time algorithm consists of the following two phases:\\[-10pt]
\begin{enumerate}
\item {\it Branching Phase}, which consists of the following three steps:
\begin{itemize}
\item [{\bf 1.}] guessing the terminals of $F'$; 
\item [{\bf 2.}] guessing the backbones of the components of $F'$; and
\item [{\bf 3.}] guessing the skeleton of $F'$, and\\[-10pt]
\end{itemize}
\item {\it Completion Phase}, where we extend the partial solutions obtained in the Branching Phase to complete ones by finding non-skeleton vertices of $F'$ and trivial components of $F$; we do this by:
\begin{itemize}
\item [{\bf 1.}] reducing the problem to {\sc Max Weight Independent Set} for $sP_3$-free graphs using the blob graph construction in Section~\ref{s-pre}, and 
\item [{\bf 2.}] solving this problem using the polynomial-time algorithm of Brandst{\"{a}}dt and Mosca~\cite{BM18}.\\[-10pt]
\end{itemize}
\end{enumerate}
We now describe our algorithm, prove its correctness and perform a running time analysis. 

\medskip
\noindent
{\bf Branching Phase.}   
This phase of our algorithm consists of a series of guesses, where we find certain vertices and substructures in $G$.
The total number of vertices to be guessed will be $\Oh(s^2d^2)$. Since we guess them exhaustively, this results in a recursion tree with $\Oh(n^{\Oh(s^2 d^2)})$ leaves. As both $s$ and $d$ are constants, this bound is polynomial in $n$.
We will ensure that the optimum solution $F=G[X]$ will be found in the call corresponding to at least one of the leaves of the recursion tree.
Based on the properties of $F$, we will expect the guessed vertices to satisfy certain conditions.
If, at some point, the guessed vertices do not satisfy these conditions, we just terminate the current call, as it will not lead us to find $F$. This will be applied implicitly throughout the execution of the algorithm.

The branching phase is illustrated in Figures~\ref{fig:guess1}--\ref{fig:guess3}.
We use the convention that gray/black elements are still unknown and blue elements are the ones that we have already guessed.

\medskip
\noindent
{\bf Step 1. Guessing the terminals of $\mathbf{F'}$.}
We guess the set $C\subseteq V$ of terminals of $F'$. By Lemma~\ref{t-3}, the total number of terminals is bounded by $(2d+1)\cdot (s-1) \leq 3ds$. 
Hence, we consider $O(n^{3ds})$ options and for each chosen set $C$ we do as follows,
For each terminal in~$C$, we guess its type (1, 2, or both). This results in $3^{|C|} \leq 3^{3ds}$ possibilities.
We also guess the partition of $C$, corresponding to the connected components of $F$. This results in at most $|C|^{|C|} \leq (3ds)^{3ds}$ additional branches.
In total, we have $\Oh(n^{\Oh(ds)})$ branches. 
Step~1 is illustrated in Fig.~\ref{fig:guess1}.

\begin{figure}[t]
\centering
\includegraphics[scale=1.1,page=2]{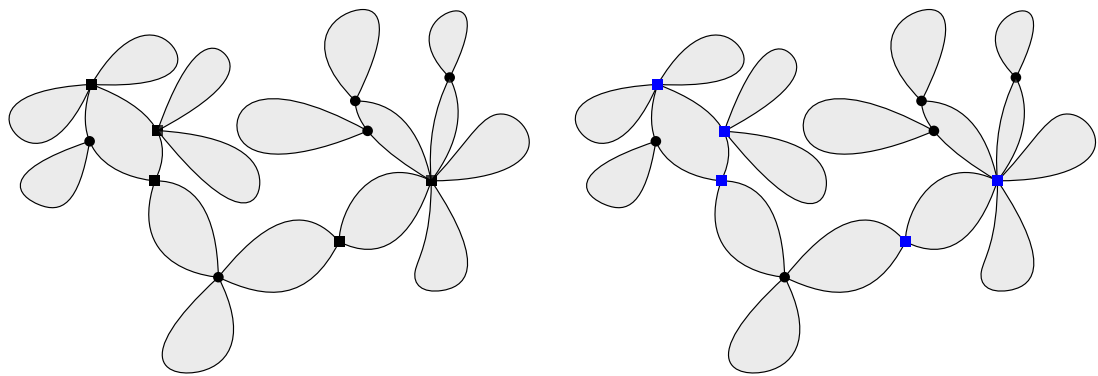}
\caption{Step 1 of the Branching Phase. Left: the graph $F'$. 
Right: the terminals of $F'$.}
\label{fig:guess1} 
\end{figure}

\medskip
\noindent
{\bf Step 2. Guessing the backbone of each component of $\mathbf{F'}$.}
Let $Z$ be a component of $F'$.
Let $C_Z \subseteq C$ be the subset of terminals that are in $Z$.
Let $T_Z$ be a backbone of $Z$.
Let $T_Z'$ be the tree obtained from $T_Z$ by contracting 
every path in $T_Z$ whose internal vertices are all non-terminals and of degree~$2$ to an edge.
Note that every non-terminal vertex of $T'_Z$ has degree at least~$3$.
Since $T'_Z$ has at most $|C_Z|$ vertices of degree at most~$2$, by the handshaking lemma we observe that the total number of vertices of $T'_Z$ is at most $2|C_Z|$.
Recall that every edge of $T'_Z$ corresponds to an induced path in $T_Z$.
Since $F'$ is 
$sP_3$-free 
and thus $P_{4s-1}$-free, we conclude that $T_Z$ has at most $2|C_Z| \cdot (4s-2) \leq 8s \cdot |C_Z|$ vertices.

\medskip
\noindent
Let $T$ be the forest whose components are 
the guessed backbones of the components of $F'$.
Note that the total number of vertices of $T$ is at most $\sum_{Z} 8s \cdot |C_Z| = 8s \cdot |C| \leq 24ds^2$.
Thus we may guess the whole forest $T$, which results in $\Oh(n^{\Oh(ds^2)})$ branches.
Step~2 is illustrated in Fig.~\ref{fig:guess2}.

\begin{figure}[t]
\centering
\includegraphics[scale=1.1,page=3]{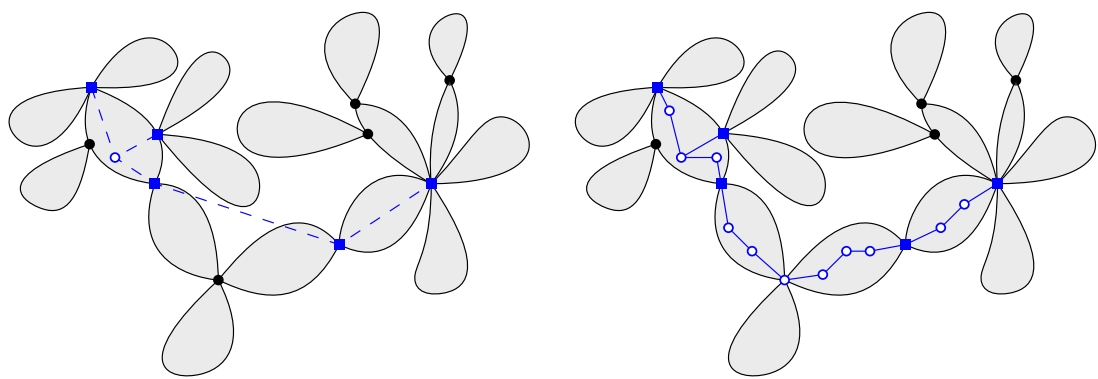}
\caption{Step 2 of the Branching Phase. Left: the tree $T'_Z$. Right: the tree $T_Z$.}
\label{fig:guess2} 
\end{figure}

\medskip
\noindent
{\bf Step 3. Guessing the skeleton of $\mathbf{F'}$.}
Let $T$ be the forest guessed in the previous step; recall that $T$ has at most $24ds^2$ vertices.
We guess the partition of $E(T)$ corresponding to \emph{blocks} of $F'$;
note that a vertex $v$ may be in several blocks: this happens precisely  if $v$ is a cutvertex in $F'$.
This results in at most $|E(T)|^{\Oh(|E(T)|)} \leq |V(T)|^{\Oh(|V(T)|)} \leq (ds)^{\Oh(ds^2)}$ branches.

We now discuss some properties of the 
(composite-)blocks.
We use the names of vertices as in the definitions introduced in Section~\ref{s-str}, recall also Fig.~\ref{fig:blocktypes}.
The crucial observation is that now there is a branch, where:\\[-10pt]
\begin{itemize}
\item For each block in $\cB_{l_1}$, we have guessed its cutvertex and no other vertices.
\item For each block in $\cB_{l_2}$, we have not guessed any vertices.
\item For each block in $\cB_{l_3}$, we guessed no vertices yet except possibly its cutvertex in $F'$ (but in the latter case we have not indicated this yet).
\item For each block in $\cB_{w}$, we have guessed
its type-2 terminal vertex
and we guessed no other vertices.
Thus, for each composite-block in $\cB_{d}$, we have guessed its cutvertex in~$F'$ and no other vertices.
\item For each block in $\cB_{in}$, we have guessed at least two vertices.\\[-10pt]
\end{itemize}
Now we proceed to the final guessing step, see Fig.~\ref{fig:guess3}.
First we guess all blocks in $\cB_{in}$. We can do it as\\[-10pt]
\begin{itemize}
\item (i) we know at least two vertices of such a block,
\item (ii) the number of these blocks is at most $|E(T)| \leq 24ds^2$, and
\item (iii) each block has at most $d$ vertices.\\[-10pt]
\end{itemize}
This results in at most $n^{\Oh(|E(T)| \cdot d)} = n^{\Oh(d^2s^2)}$ further branches. In each branch, we guessed all vertices of a skeleton~$S$ except those that are non-cutvertices of $F'$ that belong to the blocks in $\cB_{l_3}$. We will now guess which vertices of $S$ will also belong to exactly one block in $\cB_{l_3}$ (so these vertices will be cutvertices in $F'$). These vertices belong to at least one block of $\cB_{in}$. As the union of the vertices of the blocks in $\cB_{in}$  has size at most $24d^2s^2$, this leads to $\Oh(1)$ extra branches.

Finally, we guess all blocks in $\cB_{l_3}$. Note that we can do it, as\\[-10pt]
\begin{itemize}
\item (i) we know their cutvertices,
\item (ii) the number of these cutvertices is at most $24ds^2$, 
\item (iii) each cutvertex is contained in exactly one block from  $\cB_{l_3}$, and 
\item (iv) each block has at most $d$ vertices.\\[-10pt]
\end{itemize}
This results in at most $n^{\Oh(|V(T)| \cdot d)} = n^{\Oh(d^2s^2)}$ branches, that is, at most $n^{\Oh(d^2s^2)}$ sets that are potential skeletons $S$ of $F'$.
Step~3 is illustrated in Fig.~\ref{fig:guess3}.

\begin{figure}[t]
\centering
\includegraphics[scale=1.1,page=4]{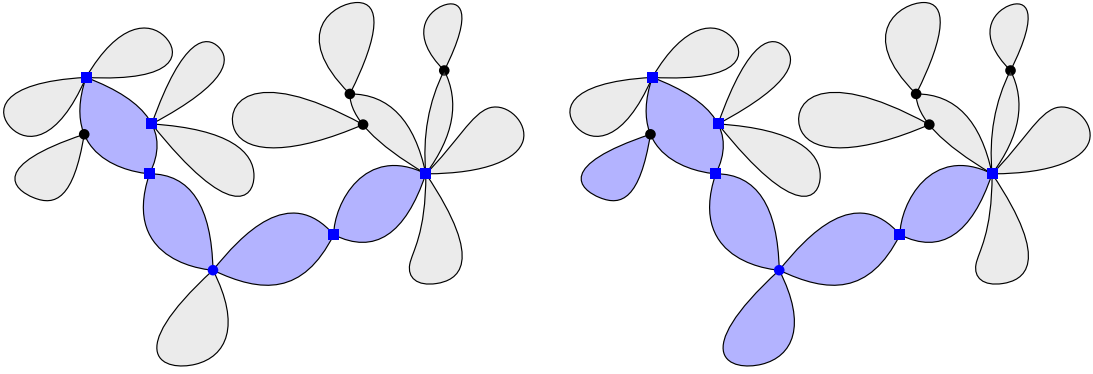}
\caption{Step 3 of the Branching Phase.
Left: our knowledge about $F'$ after guessing the blocks in $\cB_{in}$.
Right: our knowledge about $F'$ after guessing the blocks in $\cB_{l_3}$.}
\label{fig:guess3} 
\end{figure}

The following claim summarizes the outcome of the guessing phase of the algorithm.
\begin{claimm}\label{clm:guessing}
In time $\Oh(n^{\Oh(s^2d^2)})$ we can enumerate a collection $\cS$ of $\Oh(n^{\Oh(s^2d^2)})$ triples $(S, C_1, C_2)$,
where $S \subseteq V$ and $C_1,C_2 \subseteq S$ such that $\cS$ has the following property.
Let $X \subseteq V$, such that $F = G[X]$ is a $\cC$-block graph.
Let $X' \subseteq X$ be the vertex set of the graph $F'$ obtained from $F$ by removing all trivial components.
Then there is at least one triple $(S,C_1,C_2) \in \cS$, where 
\begin{enumerate}[a)]
\item $C_1$ is the set of terminals of type~1 in $F'$,
\item $C_2$ is the set of terminals of type~2 in $F'$,
\item $G[S]$ is the skeleton of $F'$. 
\end{enumerate}
\end{claimm}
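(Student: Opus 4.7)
The plan is to execute the three-step Branching Phase described in the Outline and, from each branch, read off a triple $(S, C_1, C_2)$ where $C_1, C_2 \subseteq S$ are the guessed sets of type-1 and type-2 terminals and $S$ is the vertex set of the guessed skeleton. Correctness reduces to showing that for every $\cC$-block graph $F = G[X]$ and its non-trivial part $F'$, at least one branch correctly identifies the terminals and skeleton of $F'$; the running time and the size of $\cS$ are both bounded by the product of the per-step branching factors. No further verification is required at this stage, since triples that do not correspond to a genuine $F'$ will simply fail to be extended to a $\cC$-block graph in the Completion Phase.

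For Step~1, I enumerate all candidate sets $C \subseteq V$ with $|C| \leq (2d+1)(s-1) \leq 3ds$, every assignment of each vertex of $C$ to $C_1$, $C_2$, or to both, and every partition of $C$ corresponding to the components of $F'$. By Lemma~\ref{t-3} the true set of terminals of $F'$ satisfies the size bound, so the correct $C$, type assignment, and component partition will appear in some branch; the total number of branches is $n^{\Oh(ds)} \cdot 3^{\Oh(ds)} \cdot (ds)^{\Oh(ds)}$.

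For Step~2, I fix a guess from Step~1 and, for each component $C_Z$ of the partition, enumerate all subtrees $T_Z$ of $G$ spanning $C_Z$ on at most $8s \cdot |C_Z|$ vertices. The size bound follows from the fact that the true backbone has at most $|C_Z|$ leaves (all terminals), so contracting internal degree-$2$ non-terminal vertices yields a tree with at most $2|C_Z|$ vertices by the handshaking lemma; since $F'$ is $sP_3$-free and therefore $P_{4s-1}$-free, each contracted path contributes fewer than $4s - 1$ vertices. Summing over components, the true backbone forest has at most $8s \cdot |C| \leq 24ds^2$ vertices, and the number of additional branches is $n^{\Oh(ds^2)}$.

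For Step~3, I fix $T$ and first enumerate partitions of $E(T)$ into parts that will correspond to the blocks of $F'$ meeting $T$, giving $|V(T)|^{\Oh(|V(T)|)} = (ds)^{\Oh(ds^2)}$ further branches. For each such partition, blocks in $\cB_{l_1}$ and $\cB_w$ and double-blocks in $\cB_d$ contribute no unknown vertices to the skeleton by definition, while each block in $\cB_{l_3}$ has its unique cutvertex already in $V(T)$ and each block in $\cB_{in}$ has at least two of its vertices in $V(T)$; hence every block in $\cB_{l_3} \cup \cB_{in}$ is completed by guessing its at most $d - 1$ remaining vertices from $V$, contributing $n^{\Oh(d^2 s^2)}$ branches. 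Taking $S$ to be the union of $V(T)$ with these extra guessed vertices then yields the skeleton, and multiplying the per-step counts gives the claimed $\Oh(n^{\Oh(s^2 d^2)})$ bound. The main technical obstacle is the size bound of Step~2, which is the only place where $sP_3$-freeness enters the enumeration: without the resulting $P_{4s-1}$-freeness of $F'$ the backbone could span $\Theta(n)$ vertices, and no polynomial enumeration would be possible.
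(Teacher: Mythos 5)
Your proposal is correct and follows essentially the same route as the paper: the paper's justification of Claim~\ref{clm:guessing} is exactly the three-step Branching Phase you reproduce, with the same branch counts ($n^{\Oh(ds)}$ for terminals, types and the component partition; $n^{\Oh(ds^2)}$ for the backbone forest via the handshaking/$P_{4s-1}$-freeness argument; $(ds)^{\Oh(ds^2)}$ for the block partition of $E(T)$ and $n^{\Oh(d^2s^2)}$ for completing the $\cB_{l_3}$ and $\cB_{in}$ blocks), and the same identification of $S$ as $V(T)$ together with the completed $\cB_{l_3}$ and $\cB_{in}$ blocks. Nothing essential is missing.
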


\noindent
{\bf Completion Phase.}\label{sec:blobs}
Let $\cS$ be the the collection from Claim~\ref{clm:guessing} and let  $(S,C_1,C_2) \in \cS$ be 
a triple that satisfies the properties listed in the statement of Claim~\ref{clm:guessing} for an optimum solution $F = G[X]$.
Let $\cX :=  \cX_0 \cup \cX_1 \cup \cX_2$ be the family of subsets  of $V$ with:
\begin{align*}
\cX_0 := &  \{ \{v\} ~|~ v \in V \},\\
\cX_1 := & \{ B \subseteq V ~|~ G[B] \in \cC \}, \mbox{and}\\
\cX_2 := & \{ B \subseteq V ~|~ B~\text{is a 
composite-block
whose blocks are in } \cC \}.
\end{align*}
Let $G^{\cC}$ be the graph whose vertex set is $\cX$, and edges join sets that are adjacent in $G$.
Furthermore, we define a weight function
$\wei^{\cC} \colon \cX \to \Q^+$ as
\[\wei^{\cC}(A) = \sum_{v \in A} \wei(v).\]
Note that in order to complete $S$ to the optimum solution $F = G[X]$, we need to determine:\\[-10pt]
\begin{itemize}
\item all blocks in $\cB_{l_1}$,
\item all composite-blocks in $\cB_{d}$,
\item all trivial components of $F$.\\[-10pt]
\end{itemize}
\noindent
Note that the vertex sets of all these subgraphs are in the family $\cX$ and they form an independent set in  $G^{\cC}$.
Furthermore, since $X$ is of maximum weight, the total weight of selected subsets must be maximized.
Thus the idea behind the last step is to reduce the problem to solving \textsc{Max Weight Independent Set} in an appropriately defined subgraph of $G^{\cC}$ and weights $\wei^{\cC}$.

To ensure that the selected subsets are consistent with our guess $(S,C_1,C_2) \in \cS$,
we will remove certain vertices from $G^{\cC}$.
In particular, let $\cX'$ consist of the sets $A \in \cX$, such that:\\[-10pt]
\begin{enumerate}
\item $A \in \cX_0 \cup \cX_1$ and $A$ is non-adjacent to $S$; these are  the candidates for trivial components of $F$,
\item $A \in \cX_1$ and $A$ intersects $S$ in exactly one vertex, which is in $C_1$; these are the candidates for blocks in $\cB_{l_1}$,
\item $A \in \cX_2$ and $A$ intersects $S$ in exactly one vertex, which is in $C_2$ and is not the cutvertex of $G[A]$; these are the candidates for 
composite-blocks
in $\cB_{d}$.\\[-10pt]
\end{enumerate}
\noindent
Now let $\cI \subseteq \cX'$ be an independent set of $G^{\cC}$, and let $S' = \bigcup_{A \in \cI} A$.
It is straightforward to verify that if $(S,C_1,C_2) \in \cS$ satisfies the properties listed in Claim~\ref{clm:guessing}, then $G[S \cup S']$ is a $\cC$-block graph. 
Thus, in one of the branches, we will find the optimum solution $F = G[X]$.

Now let us argue that the last step can be performed in polynomial time.
First, observe that $|\cX| \leq n + n^{d} + n^{d^2} = n^{\Oh(d^2)}$ and the family $\cX$ can be exhaustively enumerated in time $n^{\Oh(d^2)}$.
Next, $\cX'$ can be computed in time polynomial in $|\cX|$, and thus in $n$.
This implies that the graph $G^{\cC}[\cX']$ can be computed in time polynomial in $n$.
We observe that $G^{\cC}$, and thus $G^{\cC}[\cX']$, is an induced subgraph of the blob graph $\Blob{G}$, introduced in Section~\ref{s-pre}. Hence, by \cref{thm:blob}, we conclude that $G^{\cC}[\cX']$ is $sP_3$-free.

The final ingredient is the polynomial-time algorithm for \textsc{Max Weight Independent Set} in $sP_3$-free graphs by Brandst{\"{a}}dt and Mosca~\cite{BM18}. Its running time on an $n'$-vertex graph is $n'^{\Oh(s)}$.
Since the number of vertices of $G^{\cC}[\cX']$ is 
$n^{\Oh(d^2)}$, we conclude that a maximum-weight independent set in $G^{\cC}[\cX']$ can be found in time $n^{\Oh(sd^2)}$.

Summing up, in the guessing phase, in time  $n^{\Oh(s^2d^2)}$ we enumerate the family $\cS$ of size $n^{\Oh(s^2d^2)}$.
Then, for each member $(S,C_1,C_2)$ of $\cS$, we try to extend the partial solution to a complete one. 
This takes time $n^{\Oh(sd^2)}$ per element of $\cS$. 
Among all found solutions, we return the one with maximum weight.
The total running time of the algorithm is $n^{\Oh(s^2d^2)}$, which is polynomial in $n$, since $s$ and $d$ are constants.
This completes the proof of Theorem~\ref{thm:main}.

\section{The Proof of Theorem~\ref{thm:P5sP1}}\label{s-mmm}

In this section we prove that for every integer $s\geq 1$ and every finite class $\cC$ of biconnected graphs,  \CBG can be solved in polynomial time for $(sP_1+P_5)$-free graphs.
In Section~\ref{s-boundary} we consider two boundary cases, namely the case where $\cC=\emptyset$ and the case where $s=0$. We will use these cases in our algorithm in Section~\ref{s-thealgo} after first proving some structural lemmas in Section~\ref{s-s2}.

\subsection{Two Boundary Cases}\label{s-boundary}

First assume that $\cC = \emptyset$. Recall that {\sc Max $\emptyset$-Block Graph} is equivalent to \textsc{Max Independent Set}. The latter problem is polynomial-time solvable for $P_5$-free graphs (and even for $P_6$-free graphs~\cite{GKPP19}).

\begin{theorem}[\cite{LVV14}]\label{t-p5i}
{\sc Max Independent Set} can be solved in polynomial time for $P_5$-free graphs.
\end{theorem}

We also recall the aforementioned and well-known observation from Section~\ref{s-intro} on graphs that are nearly~$\pi$ for some graph property~$\pi$ (see, for example,~\cite{BH07}). As a special case, we find that  {\sc Max Independent Set} for $(P_1+P_5)$-free graphs is polynomial-time solvable if it is so for $P_5$-free graphs. Combining Theorem~\ref{t-p5i} with $s$ applications of this argument leads to the following (known) extension of Theorem~\ref{t-p5i}, which we will need as a lemma.

\begin{lemma}\label{lem:misp5sp1}
For every fixed $s$, {\sc Max Independent Set} can be solved in polynomial time in $(sP_1+P_5)$-free graphs.
\end{lemma}

Now we deal with the case where $s=0$. That is, we consider \CBG restricted to $P_5$-free graphs when $\cC$ is a finite class of biconnected graphs. For this case we will use
\emph{Monadic Second-Order Logic} ($\mathsf{MSO}_2$) over graphs, which consists of formulas with vertex variables, edge variables, vertex set variables, and edge set variables, quantifiers, and standard logic operators. We also have a predicate $\mathsf{inc}(v,e)$, indicating that the vertex $v$ belongs to the edge $e$.

Abrishami et al.~\cite[Theorems~5.3 and~7.3]{ACPRS21} proved the following result, even for the extension \emph{Counting Monadic Second-Order Logic} ($\mathsf{CMSO}_2$) of $\mathsf{MSO}_2$, which allows atomic formulas of the form $|X|\equiv p\bmod q$, where $X$ is a set variable and $0\leq p<q$ are integers (however, we do not need this extension for our purposes)
We refer the reader to~\cite{CE12} for further information on $\mathsf{MSO}_2$ logic on graphs.

\begin{theorem}[\cite{ACPRS21}]\label{t-mso}
For every fixed $\mathsf{CMSO}_2$ formula~$\Phi$ and every constant $t$, it is possible for a $P_5$-free graph $G$ with weight function $\wei: V(G) \to \Q^+$, to find in polynomial time 
a maximum-weight set $X \subseteq V(G)$, such that $G[X]$ is of treewidth at most $t$ and satisfies~$\Phi$.
\end{theorem}

\noindent
We use Theorem~\ref{t-mso} in our next lemma.

\begin{lemma}\label{l-p5}
For every finite class $\cC$ of biconnected graphs, \CBG can be solved in polynomial time for $P_5$-free graphs.
\end{lemma}

\begin{proof}
Note that every $\cC$-block graph has treewidth at most $\max_{C \in \cC } |V(C)|$, which is a constant. In order to use Theorem~\ref{t-mso} it remains to show that the property that a set $X \subseteq V(G)$ induces a $\cC$-block graph in $G$ is expressible in $\mathsf{CMSO}_2$. 
We show that we can express this property already in $\mathsf{MSO}_2$.

In what follows, $x$ and $y$ are vertex variables, $e$ is an edge variable, while $X, X'$, and $Y$ denote vertex set variables.
We will use some standard shortcuts (see also~\cite{CE12}), for instance: 
\begin{align*}
\forall (x \in X) : \phi & \qquad \text{ stands for } \qquad \forall x : (x \in X) \Rightarrow \phi \qquad \text{ and}\\
\forall (X' \subseteq X) : \phi & \qquad \text{ stands for } \qquad \forall X' : (\forall (x \in X') : x \in X) \Rightarrow \phi.
\end{align*}
We can now show the required claim.
First, we express the property that $G[X]$ is connected in $\mathsf{MSO}_2$, in the usual way:
\[
\mathsf{connected}(X) := \forall (X' \subseteq X) : \left( \exists (x \in X') \exists (y \in X) \exists e : y \notin X' \land\mathsf{inc}(x,e) \land \mathsf{inc}(y,e) \right).
\]
We now express the property that $G[X]$ is biconnected in $\mathsf{MSO}_2$:
\[
\mathsf{biconnected}(X) := \mathsf{connected}(X) \land \forall (x \in X) : \mathsf{connected}(X \setminus \{x\}).
\]
Now $G[X]$ is a block of $G[Y]$ if it is biconnected and maximal with this property:
\[
\mathsf{block}(X,Y) := \mathsf{biconnected}(X) \land \forall (y \in Y) : (y \notin X) \Rightarrow \lnot \mathsf{biconnected}(X \cup \{y\}).
\]
If $C$ is a fixed graph, then the property that $G[X]$ is isomorphic to $C$ can be easily hard-coded in a formula. We denote this predicate by $\mathsf{is}\text{-}C(X)$. This can be extended to checking whether $G[X] \in \cC$ (if $\cC$ is finite) by setting
\[
\mathsf{in}\text{-}\cC(X) := \bigvee_{C \in \cC} \mathsf{is}\text{-}C(X).
\]
Finally, $G[X]$ is a $\cC$-block graph if and only if $X$ satisfies
\begin{equation} \label{eq:csmo}
\mathsf{is}\text{-}\cC\text{-}\mathsf{block\text{-}graph}(X) := \forall (X' \subseteq X) : \mathsf{block}(X) \Rightarrow \mathsf{in}\text{-}\cC(X).
\end{equation}
This completes the proof of the lemma.
\end{proof}

\subsection{Structural Lemmas}\label{s-s2}

Let $s\geq 0$, and let $G$ be the $(sP_1+P_5)$-free instance graph with weight function $\wei$. Let $\cC$ be a finite class of biconnected graphs. Let $d$ be the maximum number of vertices of a graph in $\cC$. Similarly to Section~\ref{s-poly}, we will analyze the structure of an (unknown) maximum-weight solution.
Let $X\subseteq V(G)$ 
be such that $F = G[X]$ is a $\cC$-block graph. Again we consider the block-cut forest $\BlockTree{F}$ of $F$.
Recall that a leaf block is a block which is a leaf of $\BlockTree{F}$.

\begin{lemma} \label{lem:boundeddeg}
Every cutvertex of $F$ belongs to at most $s+1$ non-leaf blocks.
\end{lemma}

\begin{proof}
For contradiction, let $x$ be a cutvertex that belongs to $s+2$ blocks $b_1,b_2,\ldots,b_{s+2}$.
Consider one such block $b_i$ for $i \in [s+2]$. As $b_i$ is not a leaf block, there is a cutvertex $y_i \in V(b_i) \setminus \{x_i\}$
and a block $b_i' \neq b_i$ containing $y_i$, see Fig.~\ref{fig:boundeddeg}. Note that $x \notin V(b'_i)$.
Let $y_i'$ be any vertex from $V(b_i') \setminus \{y_i\}$; note that $y_i'$ is non-adjacent to $x$.
Let $Q_i$ be a shortest $x$-$y_i'$-path contained in $V(b_i) \cup V(b_i')$ and note that $Q_i$ has at least two edges.
Furthermore, for $i,j \in [s+2]$, such that $i \neq j$, the paths $Q_i$ and $Q_j$ share one endvertex (namely $x$) and no other vertices.
Thus $G[V(Q_1) \cup V(Q_2)]$ is an induced path with at least five vertices and consequently, $G[V(Q_1) \cup V(Q_2) \cup \bigcup_{i=3}^{s+2} \{y'_i\}]$ contains an induced $sP_1+P_5$, a contradiction.
\end{proof}

\begin{figure}[htb]
\centering
\includegraphics[scale=1.3,page=7]{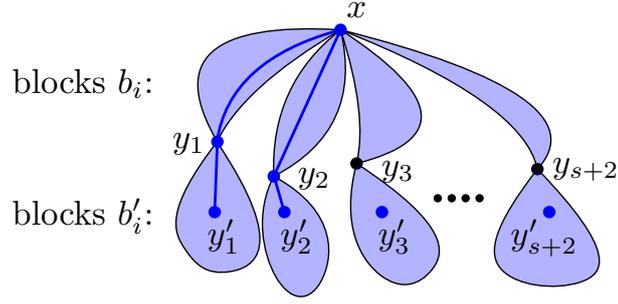}
\caption{The induced $sP_1+P_5$ in the proof of Lemma~\ref{lem:boundeddeg}.}
\label{fig:boundeddeg} 
\end{figure}

A vertex $x \in V(F)$ is called \emph{internal} if it is a cutvertex or belongs to a non-leaf block.
All other vertices are {\it external}, see Fig.~\ref{fig:internal}.

\begin{figure}[htb]
\centering
\includegraphics[scale=1.3,page=8]{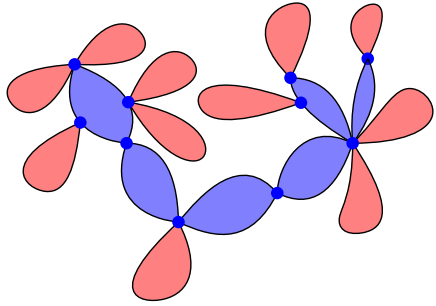}
\caption{Internal (blue) and external (red) vertices of $F$.}
\label{fig:internal} 
\end{figure}

\begin{lemma} \label{lem:boundedsize}
Every component of $F$ has at most  $(5+2s)(d(s+1))^{5+2s}$ internal vertices.
\end{lemma}
\begin{proof}
Let $X_{int}$ be the set of internal vertices of some component of $F$.
Each $v \in X_{int}$ is in at most $s+1$ blocks of $G[X_{int}]$ by Lemma~\ref{lem:boundeddeg}, 
and moreover, it has degree at most $d-1$ in each block (as each block has at most $d$ vertices). Thus the maximum degree in $G[X_{int}]$ is at most $(d-1)(s+1) \leq d(s+1)$.
As $G$ is $(sP_1+P_5)$-free, $G[X_{int}]$ is $(sP_1+P_5)$-free. Hence, $G[X_{int}]$ is also $P_{5+2s}$-free and as $G[X_{int}]$ is connected, it has diameter at most $5+2s-1$.
Every graph with maximum degree at most $d(s+1)$ and diameter at most $5+2s-1$ has at most
\[
1 + d(s+1) + (d(s+1))^2 + \ldots + (d(s+1))^{5+2s-1} \leq (5+2s)(d(s+1))^{5+2s}
\]
vertices\footnote{An astute reader might notice that this bound can actually be improved to the so-called Moore bound. However, as we do not try to optimize the constants, we kept bounds as simple as possible.}. This completes the proof of the lemma.
\end{proof}

We say that a component $F'$ of $F$ is \emph{big} if $|V(F')| \geq (ds+1) \cdot (5+2s)(d(s+1))^{5+2s}$. Otherwise $F'$ is \emph{small}.

\begin{lemma} \label{lem:bigvertex}
If a component of $F$ is big, then it has a cutvertex belonging to at least $s$ leaf blocks.
\end{lemma}
\begin{proof}
Let $X'$ be such that $G[X']=F'$ is a 
big
component of $F=G[X]$.
Let $X'_{int}$ and $X'_{ext}$ be the sets of internal and external vertices of $X'$, respectively.
Note that $X' =X'_{int}\cup X'_{ext}$ and $X'_{int}\cap X'_{ext} = \emptyset$.
By Lemma~\ref{lem:boundedsize} we have that $|X'_{int}|\leq (5+2s)(d(s+1))^{5+2s}$.
Consequently,
\[\begin{array}{lcl}
|X'_{ext}| &= &|V(F')|-|X'_{int}|\\[4pt]
&\geq &(ds+1) \cdot (5+2s)(d(s+1))^{5+2s} - |X'_{int}|\\[4pt] &\geq &ds \cdot (5+2s)(d(s+1))^{5+2s}.\end{array}\]
As every block contains at most $d$ vertices, the above implies that $G[X']$ has at least $s \cdot (5+2s)(d(s+1))^{5+2s} \geq s \cdot |X'_{int}|$ leaf blocks.
Each leaf block contains exactly one internal vertex, so by the pigeonhole principle we conclude that there must be an internal vertex 
belonging to at least $s$ leaf blocks. This completes the proof of the lemma.
\end{proof}

\subsection{The Algorithm}\label{s-thealgo}

We are now ready to present our polynomial-time algorithm for $(sP_1+P_5)$-free graphs. Let $s\geq 0$, and let $G$ be the $(sP_1+P_5)$-free instance graph with weight function $\wei$.
Let $\cC$ be a finite class of biconnected graphs.
Let $d$ be the maximum number of vertices of a graph in $\cC$.
Recall that $F$ is the optimum solution we are looking for.

The algorithm consist of three phases, in each of which we look for solutions of a specific type.
Afterwards, the algorithm returns the maximum solution found during the whole execution.

\medskip
\noindent\textbf{Case 1: $F$ has at most three big components.}\\
First suppose that $F$ has exactly three big components $F^1=G[X^1]$, $F^2=G[X^2]$, and $F^3=G[X^3]$.  
See Fig.~\ref{fig:threebig}.
For $j \in [3]$, let $X^j_{int}$ be the set of internal vertices of $G[X^j]$ (depicted in blue in Fig.~\ref{fig:threebig}).
By Lemma~\ref{lem:boundedsize} we have that $|X^j_{int}| \leq (5+2s)(d(s+1))^{5+2s}$ and thus the set $X_{int} := \bigcup_{i \in [3]} X^j_{int}$ has at most $3(5+2s)(d(s+1))^{5+2s}$ vertices.
We guess the vertices from $X_{int}$ exhaustively; this results in $\Oh(n^{(5+2s)(d(s+1))^{5+2s}})$ branches.
We discard the branches where $G[X_{int}]$ is not a $\cC$-block graph with three components.

\begin{figure}[htb]
\centering
\includegraphics[scale=1,page=9]{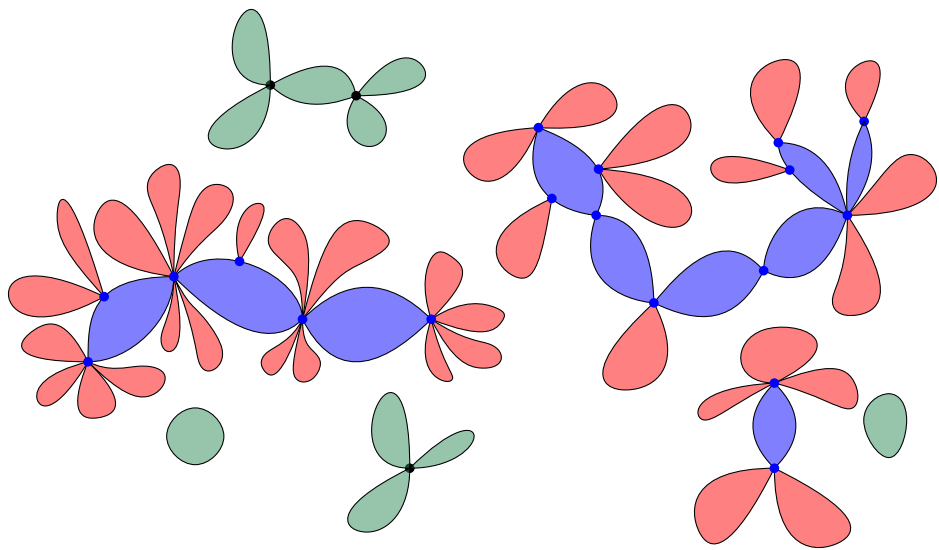}
\caption{Case 1. in the algorithm. Internal vertices of the three big components of $F$ are marked blue, while the external ones are red. Small components are marked green.}
\label{fig:threebig} 
\end{figure}

For each $X_{int}$ that we have not discarded the only thing left to do is to find:\\[-10pt]
\begin{itemize}
\item the small components of $F$ (marked green in Fig.~\ref{fig:threebig}) ,
\item the leaf blocks of $F^1$, $F^2$, and $F^3$ (marked red in Fig.~\ref{fig:threebig}).\\[-10pt]
\end{itemize}
\noindent
This task is very similar to the final case of the algorithm in Section~\ref{s-alg}. Let $\cX_s$ be the family of those subsets of $V(G) \setminus N[X_{int}]$ of size smaller than $(ds+1) \cdot (5+2s)(d(s+1))^{5+2s}$ that induce $\cC$-block graphs. The elements of $\cX_s$ are potential candidates for the vertex sets of small components of $F$.
The family $\cX_s$ can be enumerated in time $\Oh(n^{(ds+1) \cdot (5+2s)(d(s+1))^{5+2s}})$.

Let $\cX_\ell$ be the family of the sets $S \subseteq V(G)\setminus X_{int}$, satisfying the following properties:\\[-10pt]
\begin{enumerate}[(a)]
\item there is a unique $x \in X_{int}$ with a neighbour in $S$,
\item $G[S \cup \{x\}]$ is a graph from $\cC$.\\[-10pt]
\end{enumerate}
\noindent
The elements $S$ of $\cX_\ell$ are potential candidates for the sets of external vertices in the leaf blocks of $F^1$, $F^2$, and $F^3$, where $x$ is the unique neighbour of the block $S \cup \{x\}$ in $\BlockTree{F}$.
As each block has at most $d$ vertices, each set from $\cX_\ell$ has at most $d-1$ vertices. Hence, the family $\cX_\ell$ can be enumerated in time $\Oh(n^{d-1})$.

We now define $\cX:= \cX_S \cup \cX_\ell$ and have reduced to {\sc Max Independent Set} for $(sP_1+P_5)$-free graphs. Namely, we build in polynomial time the induced subgraph $\Blob{G}[\cX]$ of $\Blob{G}$ and the task is to find a maximum independent set in $\Blob{G}[\cX]$. As $\Blob{G}[\cX]$ is $(sP_1+P_5)$-free by Theorem~\ref{thm:blob}, we can use the polynomial-time algorithm from Lemma~\ref{lem:misp5sp1} for doing this. Afterwards, we use the solution found, together with $\cX$, to construct a forest $F$ for $G$. Out of all the forests found in this way, we remember one with maximum weight.

The algorithm also considers the three subcases where $F$ has zero, one, or two big components along the same lines as above but with some straightforward adjustments. In the end it returns a maximum-weight solution amongst the four solutions found. The total running time of Case~1 is polynomial, as there are 
$\Oh(n^{(5+2s)(d(s+1))^{5+2s})})$ branches and each of them is processed in time $n^{\Oh(d-1)}$, i.e., polynomial in $n$.

\medskip
\noindent\textbf{Case 2: $F$ has at least four big components.}\\
Let $X^1,X^2,X^3,X^4$ be the vertex sets of pairwise distinct big components of $F=G[X]$.
For each $j\in [4]$, there is $x_j \in X^j$ that belongs to at least $s$ leaf blocks of $F$ by Lemma~\ref{lem:bigvertex} .
Choose $s$ leaf blocks $b_1^j,\ldots,b^s_j$ containing $x^j$ and let $L^j := (\bigcup_{i=1}^s V(b^j_i)) \setminus \{x_j\}$.
Let $L := \bigcup_{j \in [4]} L^j$ and let $G' := G - (N[L] \setminus \{x_1,x_2,x_3,x_4\})$.

Now consider any $X' \subseteq V(G')$, such that $G'[X']$ is a $\cC$-block graph.
We observe that $G[X'\cup L]$ is also a $\cC$-block graph.

Due to the above observation we can proceed as follows. We will guess $x_1,x_2,x_3,x_4$, and $L$ exhaustively. 
Note that in the intended solution $\{x_1,x_2,x_3,x_4\} \cup L$ should be a $\cC$-block graph whose block-cute forest is a disjoint union of four starts with $x_1,x_2,x_3,x_4$ as centers. If this is not the case for some guess, we discard the branch.
As $|L| \leq 4ds$, the number of branches is $\Oh(n^{4+4ds})$. In each of those that we did not discard we will consider the graph $G'=G - (N[L] \setminus \{x_1,x_2,x_3,x_4\})$ and find a maximum-weight set $X' \subseteq V(G')$ such that $G'[X']$ is a $\cC$-block graph.
Then, by the above observation, $X' \cup L$ induces a $\cC$-block graph in $G$.
We will return the maximum-weight solution among all found in the branches.

The only thing left is to show that \CBG can be solved in polynomial time for $G'$.
For this, we make the following combinatorial claim.

\begin{lemma}\label{l-ppp}
The graph $G'$ is $P_5$-free.
\end{lemma}
\begin{proof}
For contradiction, suppose that $G'$ contains an induced $P_5$.
As $\{x_1,x_2,x_3,x_4\}$ is an independent set, at least one vertex from this set, say $x_1$, does not belong to this path.
Thus there exists an induced $P_5$ in $G' - x_1$.

Note that $G[L^1]$ contains an independent set $I$ of size $s$: it is sufficient to take one vertex from each block.
Furthermore, no vertex from $N[I]$ is in $G'-x_1$. Consequently, the induced~$P_5$ in $G'-x_1$, together with $I$, 
induces an $sP_1+P_5$ in $G$, a contradiction.
\end{proof}

\noindent
Due to Lemma~\ref{l-ppp}, \CBG in $G'$ can be solved in polynomial time by 
Lemma~\ref{l-p5}.\footnote{Both the bound on the treewidth of a $\cC$-block graph and the formula \eqref{eq:csmo} depend on $d$, and the dependence of these parameters (especially the $\mathsf{CMSO}_2$ formula) in the work of Abrishami et al.~\cite{ACPRS21} is quite involved.
Nevertheless, as $d$ is a constant, the running time of the algorithm in Lemma~\ref{l-p5} is polynomial.}

The total running time of Case~2 is polynomial, as there are $\Oh(n^{4+4ds})$ branches and processing each branch takes polynomial time. This completes the proof of Theorem~\ref{thm:P5sP1}.

\section{Hardness Results for Even Cycle Transversal on $\mathbf{H}$-Free Graphs}\label{s-ect}

In this section we prove that subject to a number of unsolved cases, the complexity of {\sc Even Cycle Transversal} for $H$-free graphs coincides with the one for {\sc Feedback Vertex Set.} 

An {\it odd cycle factor} of a graph $G$ is a set of odd cycles such that every vertex of $G$ belongs to exactly one of them. 
The {\sc Odd Cycle Factor} problem, which asks if a graph has an odd cycle factor, is known to be \NP-complete~\cite{Vo79}. 
The {\it line graph} $L(G)$ of a graph $G=(V,E)$ has vertex set $E$ and an edge between two distinct vertices $e$ and $f$ if and only if $e$ and $f$ share an end-vertex in $G$.
 
The proof of our next result for line graphs is somewhat similar to a proof for {\sc Odd Cycle Transversal} of~\cite{CHJMP18} but uses some different arguments as well.

\begin{theorem}\label{t-line}
{\sc Even Cycle Transversal} is \NP-complete for line graphs.
\end{theorem}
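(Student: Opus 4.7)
The plan is to give a polynomial reduction from \textsc{Odd Cycle Factor}, which is \NP-complete. Given an instance $G=(V,E)$ with $n:=|V|$, I consider the line graph $H:=L(G)$ together with $k:=|E|-n$, and claim that $G$ has an odd cycle factor if and only if $H$ admits an even cycle transversal of size at most $k$. Membership of \textsc{Even Cycle Transversal} in \NP is immediate: for a guessed $S$, one verifies in polynomial time that $H-S$ has no even cycle by checking that every block of $H-S$ is a single edge or an odd cycle.

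The core of the proof is the following structural lemma: for every connected graph $G'$ with at least one edge, if $L(G')$ has no even cycle, then $G'$ is either a tree or an odd cycle. Since each cycle of $G'$ appears as a cycle of the same length in $L(G')$, the hypothesis first forces $G'$ to be an odd cactus. Suppose for contradiction that $G'$ is an odd cactus that is neither a tree nor a single odd cycle. If $G'$ is $2$-connected (and hence, since it is not a cycle, contains a theta subgraph consisting of two vertices joined by three internally disjoint paths of lengths $a,b,c$), then the three cycles of the theta have lengths $a+b$, $b+c$, $a+c$ summing to the even number $2(a+b+c)$, so they cannot all be odd, contradicting that $G'$ is an odd cactus. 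Otherwise $G'$ has a cutvertex $v$ lying in a cycle block $C$ of length $\ell$, so $\deg_{G'}(v)\geq 3$. If $\deg_{G'}(v)\geq 4$, the edges at $v$ pairwise share $v$ and therefore form a clique $K_{\geq 4}$ in $L(G')$ containing a $4$-cycle. If $\deg_{G'}(v)=3$, the other block at $v$ is a single edge $vx$, and the $\ell$ edges of $C$ together with $vx$ induce a cycle of length $\ell+1$ in $L(G')$, which is even since $\ell$ is odd. Either way $L(G')$ contains an even cycle, a contradiction.

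Armed with the lemma, the reduction is a one-line counting argument. \emph{(Forward)} If $\{C_1,\dots,C_m\}$ is an odd cycle factor of $G$, let $S$ be the set of edges of $G$ not lying on any $C_i$; then $|S|=|E|-n=k$, the components of $G-S$ are exactly the vertex-disjoint odd cycles $C_i$, and $L(G)-S=L(G-S)$ is a disjoint union of odd cycles. \emph{(Backward)} If $S\subseteq E$ is an even cycle transversal of $L(G)$ with $|S|\leq k$, then $|E(G-S)|\geq n=|V(G-S)|$, and by the lemma every component of $G-S$ is either a tree (contributing $|E|-|V|=-1$) or an odd cycle (contributing $0$). The inequality $\sum_F(|E(F)|-|V(F)|)\geq 0$ forces every component to be an odd cycle, giving an odd cycle factor of $G$.

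The main obstacle is the structural lemma; the block-cut decomposition of an odd cactus makes the case analysis clean, but one must verify that every failure mode (a cutvertex of degree at least $3$ on a cycle block, or a $2$-connected structure strictly larger than a cycle) is covered by one of the explicit even-cycle constructions given above.
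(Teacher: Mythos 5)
Your reduction is exactly the paper's: reduce from \textsc{Odd Cycle Factor}, ask for an even cycle transversal of $L(G)$ of size at most $m-n$, and close the backward direction with the same edge-versus-vertex count (your per-component form $\sum_F(|E(F)|-|V(F)|)\geq 0$ is equivalent to the paper's bound $n\geq |Q|=|E'|+p$, and handles isolated vertices a bit more smoothly). The one place where you genuinely diverge is the structural claim. The paper proves it inside the line graph: it shows every component of $L[E']$ is an odd cycle or the line graph of a tree, using claw-freeness and a somewhat delicate case analysis for short cycles, since a triangle in a line graph can arise from either a triangle or a star in $G$ and this ambiguity has to be argued away. You instead pull the statement back to the base graph and prove that a connected $G'$ with $L(G')$ even-cycle-free is a tree or an odd cycle, via theta subgraphs in $2$-connected blocks and a degree analysis at a cutvertex on a cycle block. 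This sidesteps the triangle ambiguity entirely and is arguably the cleaner argument; the paper's version has the mild advantage of never leaving the graph the problem is actually stated on. Both proofs are correct; the only nitpicks on yours are cosmetic: in the degree-$3$ case the $\ell+1$ edges do not \emph{induce} a $C_{\ell+1}$ in $L(G')$ (they induce $C_\ell$ plus a vertex adjacent to two consecutive vertices, which merely \emph{contains} an $(\ell+1)$-cycle, and that suffices since transversals must hit non-induced cycles too), and you should say one line about why the third edge at a degree-$3$ cutvertex is a bridge (a block in which some vertex has degree $1$ must be a single edge).
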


\begin{proof}
Let $G=(V,E)$ be an instance of {\sc Odd Cycle Factor} with $n$ vertices and $m$ edges.
We claim that $G$ has an odd cycle factor if and only if its line graph $L :=  L(G)$ has an even cycle transversal of size at most $m-n$, see Fig.~\ref{fig:ocf-ect}.

First suppose $G$ has an odd cycle factor.
Then there is $E' \subseteq E$, such that $|E'|=n$ and $L[E']$ is a disjoint union of odd cycles.
Hence, $S := E\setminus E'$ is an even cycle transversal of $L$ of size $|E|-n=m-n$. 
Now suppose $L$ has an even cycle transversal~$S$ with $|S| \leq m-n$.
Let $E':=E \setminus S$, As $|E|=m$, we have $|E'| \geq n$.

We prove the following claim.

\begin{claimm}\label{clm:linegraphs}
Every component of $L[E']$ is either an odd cycle or the line graph of a tree.
\end{claimm}

\begin{claimproof}
Let $D$ be a component of $L[E']$. 
If $D$ has no cycle, then $D$ is a path, as $L$ is a line graph and thus is claw-free.
Hence, $D$ is the line graph of a path, and thus a tree.

So suppose $D$ has a cycle $C$. Then $C$ is odd and induced, as $L[E']$ is an odd cactus.
If $D$ has no vertices except for the ones of $C$, then $D$ is an odd cycle and we are done.
Suppose otherwise.

First, assume that $C$ has at least five vertices.
Since $D$ has vertices outside $C$, there is a vertex of $C$ with a neighbour outside $C$.
Hence, $D$ contains either an even cycle or an induced claw, both of which are not possible.
So now suppose that $C$ has at most four vertices. Then $C$ is a triangle, as $D$ has no even cycles.
Since $D$ is an induced subgraph of $L$, there exists a subgraph $T$ of $G$ such that $D=L(T)$.
As $D$ is a connected graph with at least four vertices, containing a triangle, $T$ is a connected graph with at least four vertices.

We aim to show that $T$ is a tree.
For contradiction, suppose that $T$ contains a cycle $C_T$.
Then $C_T$ must be a triangle, as otherwise $D$ would contain an even cycle or an odd cycle with at least five vertices.
Let $a,b,c$ be the vertices of $C_T$. As $T$ is connected and has at least four vertices, at least one of $\{a,b,c\}$, say $a$, must have a neighbour $d \notin \{b,c\}$. However, the edges $ad-ab-bc-ac$ form a $C_4$ in $D$, a contradiction with $D$ being an odd cactus.
So we conclude that $T$ contains no cycles and thus $T$ is a tree.
\end{claimproof}

Each component of $L[E']$ that is an odd cycle corresponds to an odd cycle in $G$.
By Claim~\ref{clm:linegraphs}, each component $D$ of $L[E']$ that is not an odd cycle is the line graph of some subtree $T$ of $G$.
So, if $D$ has $r$ vertices, then $T$ has $r+1$ vertices.
Furthermore, the vertex sets of~$G$ corresponding to distinct components of $L[E']$ are pairwise disjoint.
Suppose that $L[E']$ has $p \geq 0$ components that are not odd cycles.
Let $Q$ be the set of vertices incident to at least one edge of $E'$.
Then $n = |V(G)| \geq |Q| = |E'|+ p \geq n+p$.
Hence, $p=0$ and $|Q|=n$. So, the components of $L[E']$ correspond to an odd cycle factor of $G$. This completes the proof.
\end{proof}

\begin{figure}[t]
\centering
\includegraphics[scale=0.9,page=6]{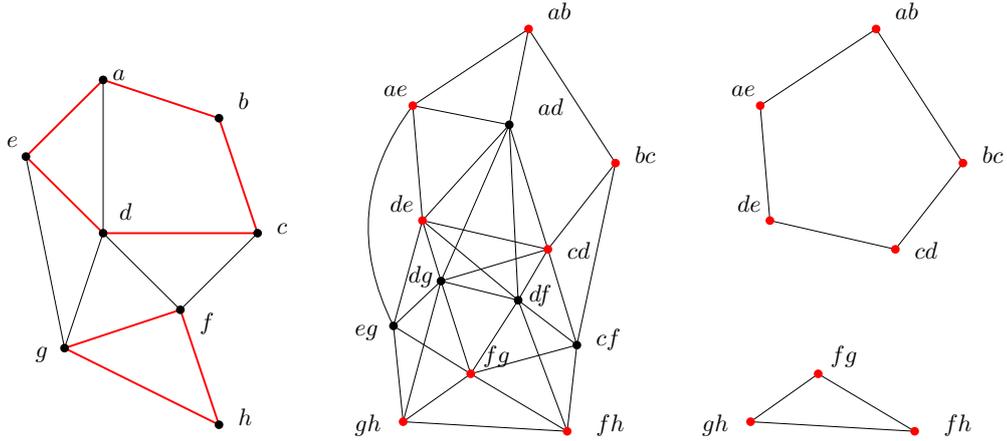}
\caption{Left: a graph $G$ with an odd cycle factor. Middle: the graph $L=L(G)$ and the set $E'$ (red). Black vertices form an even cycle factor. Right: the odd cactus $L[E']$.}
\label{fig:ocf-ect} 
\end{figure}

We make a straightforward observation similar to an observation for {\sc Feedback Vertex Set}~\cite{CHJMP18,Po74}, except that we must subdivide edges of a graph an even number of times.

\begin{theorem}\label{t-girth}
For every $p\geq 3$, {\sc Even Cycle Transversal} is \NP-complete for graphs of girth at least $p$.
\end{theorem}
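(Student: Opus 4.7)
The plan is to use a Poljak-style subdivision reduction from an already \NP-complete version of {\sc Even Cycle Transversal}, such as the line-graph case established in Theorem~\ref{t-line}. Given an instance $(G,k)$ of ECT, I would construct $G'$ from $G$ by subdividing every edge the same \emph{even} number of times, say $2t$, where $t=t(p)$ is chosen large enough that $3(2t+1)\geq p$. Membership in \NP\ is clear, so only the correctness and girth conditions need justification.

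The key observation is that each edge of $G$ becomes a path of odd length $2t+1$ in $G'$. Since every internal vertex of such a path has degree $2$, cycles in $G'$ are in bijection with cycles in $G$: a cycle of length $\ell$ in $G$ becomes a cycle of length $\ell(2t+1)$ in $G'$, and conversely every cycle in $G'$ arises this way. Because $2t+1$ is odd, this length scaling preserves parity; hence even cycles of $G$ correspond exactly to even cycles of $G'$. In particular, the girth of $G'$ is at least $3(2t+1)\geq p$.

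It remains to verify the transversal correspondence. Any even cycle transversal $S\subseteq V(G)$ of $G$ also serves as one for $G'$, because each even cycle of $G'$ is obtained by replacing edges of a corresponding even cycle $C$ of $G$ with subdivision paths, and so contains every original vertex of $C$. Conversely, given any even cycle transversal $S'$ of $G'$, I convert it into one of $G$ of no greater size by replacing each subdivision vertex appearing in $S'$ by an arbitrary original endpoint of the edge on which it was created. For any even cycle $C$ of $G$, the corresponding even cycle $C'$ of $G'$ meets $S'$, and any such meeting vertex is either already in $V(C)$ or was created while subdividing an edge of $C$, so the replacement still lies in $V(C)$. Thus $G$ has an ECT of size at most $k$ iff $G'$ does, yielding the desired polynomial-time reduction.

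The whole argument is indeed ``straightforward,'' and the only point that requires care, and that distinguishes it from the FVS analogue of Poljak, is ensuring the number of subdivisions per edge is \emph{even}: this is exactly what guarantees that $2t+1$ is odd, and thereby that cycle parities are preserved by the reduction.
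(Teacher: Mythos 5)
Your proposal is correct and follows essentially the same route as the paper: a Poljak-style reduction from (general) \textsc{Even Cycle Transversal} in which every edge is subdivided an even number of times, so that each edge becomes an odd-length path, cycle parities are preserved, and the girth becomes at least $p$. The paper states this in one sentence (using $2p$ subdivisions per edge); your write-up merely supplies the routine verification of the cycle bijection and the transversal correspondence.
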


\begin{proof}
We reduce from {\sc Even Cycle Transversal} for general graphs by noting the following. 
Namely, the size of a minimum even cycle transversal in $G$ is equal to the size of a minimum even cycle transversal in the graph $G'$ obtained from $G$ by subdividing every edge $2p$ times, and the girth of $G'$ is at least $p$.
\end{proof}

The next theorem is analogous to the one for {\sc Feedback Vertex Set}; see also Table~\ref{t-table}.

\begin{theorem}
Let $H$ be a graph. Then {\sc Even Cycle Transversal} for $H$-free graphs is polynomial-time solvable if $H\ssi sP_3$ or $H\ssi sP_1+P_5$ for some $s\geq 0$, and it is \NP-complete if $H$ is not a linear forest.
\end{theorem}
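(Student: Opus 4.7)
The plan is to derive each part by combining the earlier results in this paper through simple induced-subgraph containment arguments.

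For the polynomial-time side, I would first observe that if $H \ssi H'$, then every $H$-free graph is automatically $H'$-free (any induced copy of $H'$ would contain an induced copy of $H$). Applying this to the hypothesis: if $H \ssi sP_3$, then the class of $H$-free graphs is a subclass of $sP_3$-free graphs, so polynomial-time solvability follows from Corollary~\ref{c-1}. If $H \ssi P_5$, then the class of $H$-free graphs is a subclass of $P_5$-free graphs, so polynomial-time solvability follows from Corollary~\ref{cor:pfive}.

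For the hardness side, I would use the characterization that a graph is a linear forest if and only if it is acyclic and has maximum degree at most $2$. Thus if $H$ is not a linear forest, then either $H$ contains a cycle, or $H$ is a forest with some vertex $v$ of degree at least $3$. In the first case, let $g$ be the length of a shortest cycle in $H$; then every graph of girth at least $g+1$ is $H$-free (an induced $H$ would force a cycle of length at most $g$), and Theorem~\ref{t-girth} (applied with $p = g+1$) shows that \textsc{Even Cycle Transversal} is already \NP-complete on this subclass. In the second case, $v$ together with three of its neighbours induces a claw in $H$, because $H$ is a forest and hence the three neighbours form an independent set; so $H$ is not an induced subgraph of any line graph (line graphs being claw-free), meaning line graphs are $H$-free, and Theorem~\ref{t-line} gives \NP-completeness.

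Assembling these observations yields the theorem; there is no substantial obstacle, since the two directions merely invoke Corollaries~\ref{c-1} and~\ref{cor:pfive} on one side and Theorems~\ref{t-line} and~\ref{t-girth} on the other. The only very minor subtlety is ensuring that in the forest-with-high-degree subcase, the claw appears as an \emph{induced} subgraph of $H$, which follows immediately from acyclicity.
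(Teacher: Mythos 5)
Your proposal is correct and follows essentially the same route as the paper: the polynomial cases are inherited from Corollaries~\ref{c-1} and~\ref{cor:pfive} via hereditariness, and the hardness cases split into ``$H$ has a cycle'' (apply Theorem~\ref{t-girth}; the paper takes $p=|V(H)|+1$ where you take $p=g+1$, an immaterial difference) and ``$H$ has an induced claw'' (apply Theorem~\ref{t-line}). Your extra justification of why a non-linear-forest without cycles contains an induced claw is a welcome but minor elaboration of what the paper leaves implicit.
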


\begin{proof}
If $H\ssi sP_3$ or $H\ssi sP_1+P_5$ for some integer $s\geq 0$, then we use Corollary~\ref{c-1}.
If $H$ is not a linear forest, then it has a cycle or a claw. If $H$ has a cycle, then we apply Theorem~\ref{t-girth} for $p = |V(H)|+1$. Otherwise, $H$ has an induced claw and we apply Theorem~\ref{t-line}.
\end{proof}

\section{Conclusions}\label{s-con}

We proved that the \CBG problem is polynomial-time solvable on $sP_3$-free graphs and $(sP_1+P_5)$-free graphs (for every $s\geq 1$)
Hence, we have showed that for a large family of graphs ${\cal F}$, the {\sc Min ${\cal F}$-Transversal} problem is polynomial-time solvable on these graph classes.
The two best-known problems in this framework are {\sc Feedback Vertex Set} and {\sc Even Cycle Transversal}. Our results for {\sc Feedback Vertex Set} extend all the known polynomial-time results for {\sc Feedback Vertex Set} on $H$-free graphs, namely for $sP_2$-free graphs~\cite{CHJMP18}, $(sP_1+P_3)$-free graphs~\cite{DFJPPR19} and $P_5$-free graphs~\cite{ACPRS21}. By proving some new hardness results we also showed that in contrast to the situation for {\sc Odd Cycle Transversal}, all other known complexity results for {\sc Feedback Vertex Set} on $H$-free graphs hold for {\sc Even Cycle Transversal} as well. Hence, so far both problems behave the same on special graph classes. 

Due to the above, it would be interesting to prove polynomial equivalency of the two problems more generally. Table~\ref{t-table} still shows some missing cases for each of the three problems. 
In particular, we highlight the following borderline cases, namely the cases $H=P_2+P_4$ and $H=P_6$ for {\sc Feedback Vertex Set} and {\sc Even Cycle Transversal} and the case $H=P_1+P_4$ for {\sc Odd Cycle Transversal}.

We recall that in Section~\ref{s-boundary} we showed that the \CBG problem 
is a special case of finding a maximum-weight subset of vertices that induces a bounded-treewidth graph which satisfies a given $\mathsf{CMSO}_2$ formula. 
The latter problem can be solved in \emph{quasipolynomial time} for $P_r$-free graphs for any fixed $r$~\cite{GLPPR21}.
Thus we immediately obtain the following.

\begin{corollary}
For every linear forest $H$ and every \emph{finite} class $\cC$ of biconnected graphs, 
\CBG 
can be solved in quasipolynomial time for $H$-free graphs.
\end{corollary}

\noindent
In particular, this implies quasipolynomial-time algorithms for \textsc{Feedback Vertex Set} and {\sc Even Cycle Transversal} for $H$-free graphs if $H$ is a linear forest, whereas {\sc Odd Cycle Transversal} is \NP-complete even for $P_6$-free graphs~\cite{DFJPPR19}. Hence, a polynomial-time algorithm for {\sc Feedback Vertex Set} and {\sc Even Cycle Transversal} on $P_6$-free graphs would show that these two problems, restricted to $H$-free graphs, differ in their complexity from {\sc Odd Cycle Transversal}.

\medskip
\noindent
{\it Acknowledgements.} The first author thanks Carl Feghali for an inspiring initial discussion. The third author thanks Marcin Pilipczuk for some fruitful discussion including an alternative polynomial-time algorithm for {\sc Feedback Vertex Set} on $(P_1+P_5)$-free graphs.

\end{document}